\definecolor{light-gray}{gray}{0.9}
\crefname{algocf}{Algorithm}{Algorithms}
\Crefname{algocf}{Algorithm}{Algorithms}
\theoremstyle{definition}
\newtheorem{definition}{Definition}[section]
\theoremstyle{definition}
\newtheorem{example}{Example}[section]
\newtheorem{theorem}{Theorem}[section]
\newtheorem{lemma}[theorem]{Lemma}
\newtheorem{problem}[theorem]{Problem}
\newtheorem*{remark}{Remark}
\newcommand{\dom}{\text{dom}}
\newcommand{\defeq}{\triangleq}
\newcommand{\tuple}[1]{\left\langle #1 \right\rangle}     
\newcommand{\set}[1]{\left\lbrace #1 \right\rbrace}       
\newcommand{\seq}[2]{\left\lbrace #1 \right\rbrace_{#2=0}^\infty}       
\newcommand{\I}[2]{\mathbf{I}_{#1}\left(#2\right)}
\newcommand{\LM}{\mathrm{LM}}
\newcommand{\imul}{\cdot} 
\newcommand{\istar}[1]{#1^*} 
\newcommand{\fstar}[1]{#1^\oast} 
\newcommand{\iexp}[2]{#1^{#2}}
\newcommand{\Det}[2]{\textsf{Det}\left(#1,#2\right)}
\newcommand{\TF}{\textbf{TF}}
\newcommand{\LIRR}{\textbf{LIRR}}
\newcommand{\dvext}[1]{\overline{#1}} 
\newcommand{\invimage}[2]{\dvext{#1}^{-1}\left[#2\right]} 
\newcommand{\image}[2]{\dvext{#1}\left[#2\right]} 
\newcommand{\iformula}[1]{\mathbf{F}(#1)} 
\newcommand{\gspace}[1]{\textit{span}\left(#1\right)}
\newcommand{\gideal}[1]{\left\langle#1\right\rangle}
\newcommand{\galg}[1]{\textit{alg}\left(#1\right)}
\newcommand{\cmark}{\ding{51}}%
\newcommand{\xmark}{\ding{55}}%
\newcommand{\elimorder}[1]{\ll_{#1}}
\definecolor{mGreen}{rgb}{0,0.6,0}
\definecolor{mGray}{rgb}{0.5,0.5,0.5}
\definecolor{mPurple}{rgb}{0.58,0,0.82}
\definecolor{backgroundColour}{rgb}{1.0,1.0,1.0}
\lstdefinestyle{CStyle}{
    backgroundcolor=\color{backgroundColour},   
    commentstyle=\color{mGreen},
    keywordstyle=\color{magenta},
    numberstyle=\tiny\color{mGray},
    stringstyle=\color{mPurple},
    basicstyle=\footnotesize,
    breakatwhitespace=false,         
    breaklines=true,                 
    captionpos=b,                    
    keepspaces=true,                 
    numbers=left,                    
    numbersep=5pt,                  
    showspaces=false,                
    showstringspaces=false,
    showtabs=false,                  
    tabsize=2,
    language=C
}
\newcommand{\Tool}{{\textsc{Abstractionator}}\xspace}
\DeclareFontFamily{U}  {MnSymbolC}{}
\DeclareFontShape{U}{MnSymbolC}{m}{n}{
    <-6>  MnSymbolC5
   <6-7>  MnSymbolC6
   <7-8>  MnSymbolC7
   <8-9>  MnSymbolC8
   <9-10> MnSymbolC9
  <10-12> MnSymbolC10
  <12->   MnSymbolC12}{}
\DeclareSymbolFont{MnSyC}{U}{MnSymbolC}{m}{n}
\DeclareMathSymbol{\righthalfcup}{\mathrel}{MnSyC}{184}
\newcommand{\romanqed}{$\righthalfcup$}
\newcommand{\romanqedhere}{\let\qedsymbol\romanqed\qedhere}
\newif\iflong
\newcommand{\refappendix}[1]{\sectref{#1}}
\newcommand{\refappendix}[1]{the supplementary materials}
\begin{document}

\begin{abstract}
  This paper presents a program analysis method that generates program summaries involving polynomial arithmetic. Our approach builds on prior techniques that use solvable polynomial maps for summarizing loops. These techniques are able to generate \emph{all} polynomial invariants for a restricted class of programs, but cannot be applied to programs outside of this class---for instance, programs with nested loops, conditional branching, unstructured control flow, etc.
There currently lacks approaches to apply these prior methods to the case of general programs. This paper bridges that gap. Instead of restricting the kinds of programs we can handle, our method \emph{abstracts} every loop into a model that can be solved with prior techniques, bringing to bear prior work on solvable polynomial maps to general programs. While no method can generate all polynomial invariants for arbitrary programs, our method establishes its merit through a \emph{monotonicty} result. We have implemented our techniques, and tested them on a suite of benchmarks from the literature. Our experiments indicate our techniques show promise on challenging verification tasks requiring non-linear reasoning.
\end{abstract}

\title{Solvable Polynomial Ideals: The Ideal Reflection for Program Analysis}

\author{John Cyphert}
\orcid{0009-0007-6310-413X}
\affiliation{%
  \institution{University of Wisconsin-Madison}
  \city{Madison}
  \state{WI}
  \country{USA}
}
\email{jcyphert@wisc.edu}

\author{Zachary Kincaid}
\orcid{0000-0002-7294-9165}
\affiliation{%
  \institution{Princeton University}
  \city{Princeton}
  \state{NJ}
  \country{USA}
}
\email{zkincaid@cs.princeton.edu}


\begin{CCSXML}
<ccs2012>
<concept>
<concept_id>10003752.10003790.10002990</concept_id>
<concept_desc>Theory of computation~Logic and verification</concept_desc>
<concept_significance>500</concept_significance>
</concept>
<concept>
<concept_id>10003752.10003790.10003794</concept_id>
<concept_desc>Theory of computation~Automated reasoning</concept_desc>
<concept_significance>500</concept_significance>
</concept>
<concept>
<concept_id>10003752.10003790.10011119</concept_id>
<concept_desc>Theory of computation~Abstraction</concept_desc>
<concept_significance>300</concept_significance>
</concept>
<concept>
<concept_id>10002950.10003714.10003715.10003720.10003747</concept_id>
<concept_desc>Mathematics of computing~Gröbner bases and other special bases</concept_desc>
<concept_significance>300</concept_significance>
</concept>
</ccs2012>
\end{CCSXML}

\ccsdesc[300]{Theory of computation~Logic and verification}
\ccsdesc[500]{Theory of computation~Automated reasoning}
\ccsdesc[500]{Theory of computation~Abstraction}
\ccsdesc[300]{Mathematics of computing~Gröbner bases and other special bases}

\keywords{Algebraic program analysis, polynomial invariants, monotone program analysis}

\maketitle

\section{Introduction} \label{sec:introduction}
There has been a long history of prior work that automatically generates polynomial invariants. One line of work in this direction seeks to generate \emph{all} possible polynomial invariants for a restricted class of programs \cite{ISAAC:RCK2004,TACAS:Kovacs2008,VMCAI:HJK2018,LICS:HOPW2018,JACM:EOPW2023}. These complete methods give strong, predictable results; however, there is no obvious way to use such techniques for general programs, which may contain nested loops, branching, and unstructured control flow. Another line of research into the automatic generation of polynomial invariants looks to apply to general programs; however, such techniques are often heuristic in nature \cite{FMCAD:FK2015,POPL:KCBR2018} or are limited on what kind of invariants they can produce \cite{POPL:SSM2004,SAS:CJJK2012,ATVA:OBP16,MULLEROLM2004}, e.g. returning only polynomials up to some degree.

It is impossible to fully bridge the gap between these two lines of research. No method can generate all polynomial invariants for general programs. No method can even generate all \emph{linear} invariants for general programs \cite{MOS2004}.
However, the two lines of research raises the question: Can a method that generates polynomial invariants and works for general programs provide \emph{some} guarantee on predictability?

In this paper we present techniques to give a positive answer to the previous question.
Our method builds on the \emph{algebraic program analysis} framework \cite{CAV:KRC2021}. Within this framework, summaries are created for larger and larger subprograms in a bottom-up manner. The essential challenge is the summarization of loops. In short, summarizing a loop amounts to over-approximating the reflexive transitive closure of a transition formula that describes the loop body. Once an appropriate loop summarization technique is constructed, the algebraic framework can then employ the technique as a subroutine in the analysis of whole programs.

The technique for summarizing loops described in this paper works by \emph{abstracting} a transition formula describing an arbitrary loop body
to an object, which we call a \emph{transition ideal}. Informally, a transition ideal is a set of polynomial equations describing the transition relation of a loop body. Checking whether a non-linear transition formula (with an integrality predicate) implies a polynomial equation is undecidable for the standard model; however, \citet{POPL:KKZ2023} developed a theory, \LIRR,~for which it is possible to compute \textit{all} implied polynomial equations. 
Our work utilizes \LIRR~to extract transition ideals from loop body transition formulas. The extraction of transition ideals is complete for \LIRR~and sound for the standard interpretation.
A transition ideal can be generated as a summary of an inner loop, a summary of a program with branches, etc. To summarize the loop, we would like to compute the transitive closure of the extracted transition ideal; however, the dynamics of a transition ideal can be chaotic and difficult to capture. Thus, our key insight is that we need to again abstract the transition ideal to some other object for which we know how to compute invariants. In \cref{sec:reflection} we show how given an arbitrary transition ideal one can compute a \emph{best abstraction} as a \emph{solvable transition ideal}, which we call its \emph{solvable reflection}.
A solvable transition ideal is a transition ideal that contains all of the defining polynomials of at least one \emph{solvable polynomial map}, a class of polynomial maps that have been utilized in prior work on complete polynomial invariant generation \cite{TACAS:Kovacs2008,VMCAI:HJK2018,ISAAC:RCK2004,SAS:ABKKMS2022}. 
In \cref{sec:closure} we show that the method of \citet{KAUERS2008} can be generalized to compute the polynomial invariants of a solvable transition ideal. These resulting polynomial invariants can be translated back to a transition formula, which gives a method for summarizing arbitrary loops. Hence, via the algebraic framework we achieve a method to generate polynomial invariants to programs with arbitrary control-flow.

While our method is not complete for arbitrary programs, we can guarantee our method is \emph{monotone}. The exact definition of monotonicity is given in \cref{sec:summary}; informally though, a program analysis is monotone if ``more information in yields more information out''. That is, improving the precision of a code-fragment, e.g. by strengthening the precondition or adding assumptions, necessarily improves the overall analysis result. 
Our method is monotone because we do not extract just \emph{some} solvable transition ideal from a loop body, but our method extracts the \emph{best} solvable transition ideal. Another way to understand this result is that while our method is not complete for general programs, it is complete, in a sense, at every loop. Given a summary for a loop body, we will always compute the solvable transition ideal that most closely approximates it. Thus, in the restricted case of a simple loop whose body is described by a solvable polynomial map, our method is complete; and, in general our method is monotone.

Summarizing, this paper presents a program analyzer that (1) produces \emph{non-linear summaries}, (2) works for polynomial programs with \emph{arbitrary control-flow}, and (3) is \emph{monotone}. We have implemented our method and our experiments show that it performs comparably to the top performers on the \texttt{c/ReachSafety-Loops} subcategory of the Software Verification Competition.

\iflong
\else
\pagebreak
\fi
\noindent Our paper makes the following contributions:
\begin{enumerate}
    \item We introduce transition ideals and solvable transition ideals. We further generalize solvable transition ideals with \emph{ultimately} solvable transition ideals.
    \item We show that transition ideals admit (ultimately) solvable \textit{reflections}.
        \begin{itemize}
        \item We present algorithms for computing solvable linear reflections (\cref{sec:reflection-computation}) and ultimately solvable reflections (\cref{sec:ultimately-solvable}) of transition ideals.  Linear reflections correspond to \textit{best abstractions} with respect to linear simulations.
            \item We generalize this method to compute best abstractions with respect to  \emph{polynomial simulations} of bounded degree (\cref{sec:polynomial-simulations}).
        \end{itemize}
    \item We present a complete algorithm for computing \emph{all} the polynomial invariants of (ultimately) solvable transition ideals.
    \begin{itemize}
        \item Our summarization algorithm utilizes a sub-algorithm that might be of independent interest. Our sub-algorithm generalizes the technique of \citet{KAUERS2008}, which computes the algebraic relations of c-finite sequences of rational numbers, to compute algebraic relations of c-finite sequences over an arbitrary $\mathbb{Q}$-algebra (\cref{prob:solvePolyMap}).
    \end{itemize}
    \item An implementation of the combination of the abstraction and transitive closure results yields a \textit{monotone} program analysis that produces polynomial invariants for polynomial programs.
\end{enumerate}

The rest of the paper is organized as follows. \Cref{sec:overview} illustrates the main features of our method on a challenging example. \Cref{sec:background} gives background on commutative algebra, polynomial ideals, and solvable polynomial maps. \Cref{sec:reflection} describes the method of extracting (ultimately) solvable transition ideals from arbitrary transition ideals. \Cref{sec:closure} describes the method of summarizing (ultimately) solvable transition ideals. \Cref{sec:summary} connects these ideas to transition formulas, and shows how the methods can be integrated into a program analyzer. \Cref{sec:evaluation} presents the experimental evaluation. \Cref{sec:related} discusses related work.
\iflong
\else
Omitted proofs can be found in extended version of this document \cite{POPL:CK2024}.
\fi


\section{Overview}
\label{sec:overview}
\begin{figure}[t]
\begin{subfigure}[t]{0.45\textwidth}
    \centering
    \begin{lstlisting}[style=CStyle]
    a = x; b = y; p = 1; q = 0;
    r = 0; s = 1; c = 0; k = 0;
    while (b != 0) {
      c = a; k = 0;
      while (c >= b) {
        c = c - b;
        k = k + 1;
      }
      a = b;
      b = c;
      p, q = q, p - q * k;
      r, s = s, r - s * k;
    }
    assert(q*x + s*y == 0);
    assert(p*x + r*y == a);
    \end{lstlisting}
    \caption{This program implements the extended Euclidean algorithm. The program is a modified version of the verification task egcd2-ll.c (\url{https://github.com/sosy-lab/sv-benchmarks/blob/master/c/nla-digbench/egcd2-ll.c}). The assignments on lines 11 and 12 are parallel assignments.}
    \label{fig:overviewEx}
    \end{subfigure}
    \begin{subfigure}[T]{0.45\textwidth}
    \begin{tikzpicture}
\node (form) [label=left:{Loop body formula}] {$F(X, X')$};

\node (ideal) [below = 0.75cm of form, label = left:{Transition ideal}] {$T \in \mathbb{Q}[X, X']$};

\node (solv) [below = 0.75cm of ideal, label = left:{Solvable ideal}] {$U \in \mathbb{Q}[Y,Y']$};

\node (sol) [below = 0.75cm of solv, label = left:{Ideal summary}] {$U^* \in \mathbb{Q}[Y,Y']$};

\node (formsol) [below = 0.75cm of sol, label=left:{Loop body summary}] {$\fstar{F(X, X')}$};

\draw[thick,->,>=stealth] (form) -- node[inner sep=10pt, anchor=west, label=left:{(\Cref{sec:summary})}] {Step 1\label{step1}} (ideal);
\draw[thick,->,>=stealth] (ideal) --  node[inner sep=10pt, anchor=west, label=left:{(\Cref{sec:reflection})}] {Step 2\label{step2}} (solv);
\draw[thick,->,>=stealth] (solv) -- node[inner sep=10pt, anchor=west, label=left:{(\Cref{sec:closure})}] {Step 3\label{step3}} (sol);
\draw[thick,->,>=stealth] (sol) -- node[inner sep=10pt, anchor=west] {Step 4\label{step4}} (formsol);
\end{tikzpicture}
\vspace{1cm}
\caption{Overview of the method}
\label{fig:overviewDiag}
    \end{subfigure}
    \caption{}
\end{figure}
In this section, we present our technique for program verification on the motivating example found in \cref{fig:overviewEx}. Two relevant features of this verification task is that (1) the program has a \emph{nested loop}; and (2) to verify the assertions at the end of the program, an invariant involving \emph{non-linear arithmetic} is required. This combination of nested loops and non-linear arithmetic presents a significant challenge for existing methods. 

Our approach to analyzing programs builds on the algebraic program analysis framework \cite{CAV:KRC2021}. Within this framework, analysis proceeds by producing \emph{transition formulas} for each program substructure. A transition formula, $F(X, X')$, is a formula over the program variables $X$ as well as their primed counterparts $X'$. Such a formula represents a relation over program states, where the unprimed variables correspond to the pre-state and the primed variables correspond to the post-state. Summaries for the sequencing and branching of program substructures corresponds to the transition formulas operations $F(X, X')\circ G(X, X')\defeq \exists X''. F(X, X'')\land G(X'', X')$ and $F(X, X')\oplus G(X, X')\defeq F(X, X')\lor G(X, X')$ respectively.
From these two operations, one can accurately summarize non-looping code. For example, a transition formula, $F_{i}$ for the inner-loop of \cref{fig:overviewEx} would look like $F_{i} \defeq c < b \land c' = c - b \land k' = k + 1 \land b' = b$.
Of course, we are interested in analyzing programs that do have loops, so an algebraic analysis for looping code must have an iteration operator $\fstar{F}$. The benefit is that once an iteration operator is created, the analysis can work for any loop, regardless of the underlying program structure.

\Cref{fig:overviewDiag} gives an overview of our iteration operator. We illustrate the method by discussing the analysis of the inner-loop of \cref{fig:overviewEx}. 
The goal of Step~\hyperref[step1]{1} (discussed in \cref{sec:summary}) is to extract a \emph{transition ideal} from the loop body. Informally, transition ideal $T$ corresponds to a transition formula that can be expressed as a conjunction of polynomial equations. That is, the transition ideal of a transition formula $F(X,X')$ is the set $\mathbf{I}(F) = \set{ p \in \mathbb{Q}[X,X'] : F \models p = 0}$ of polynomials that vanish on all models of $F$.  For example, for the transition formula for the inner-loop body, we have $T_i = \mathbf{I}(F_i) = \mathbf{I}(c < b \land c' = c - b \land k' = k + 1\land b' = b) = \gideal{c' - c + b, k' - k - 1, b'-b}$. 

The objective of Step~\hyperref[step2]{2} (presented in \cref{sec:reflection}) of our method is
to extract a \emph{solvable transition ideal} from $T_i$.
We say that a transition ideal $U$ is solvable if it contains a solvable polynomial map $p$ (a homomorphism $p : \mathbb{Q}[X] \rightarrow \mathbb{Q}[X]$ of a particular form, defined in \cref{sec:c-finite}), in the sense that $x' - p(x)$ belongs to $U$ for all variables $x$. Such a $p$ is called a \textit{solvability witness} for $U$.
For the transition ideal $T_i$, the extraction step is trivial because $T_i$ itself is solvable: the function $p_i$ mapping $\set{c \mapsto c-b, k \mapsto k+1, b \mapsto b }$ (which is affine, a special case of solvable) is a witness. Therefore, the result of the second step of our method is the solvable transition ideal $U_i = T_i$.

The task of Step~\hyperref[step3]{3} (presented in \cref{sec:closure}) of our method is to ``summarize'' the ideal $U_i$, as the transition ideal $U_i^* = \bigcap_{n=0}^\infty U_i^n$. Thinking of $U_i^n$ as a set of polynomial constraints that hold after $n$ iterations of $U_i$, $U_i^*$ represents the constraint that hold after \emph{any} number of iterations. 
The process of computing $U^*$ from a solvable transition ideal $U$ is the subject of \cref{sec:closure}. The basic idea is that we can ``solve'' the solvable witness $p$ by deriving a closed-form $p^n(x)$ for each $x \in X$. In the case of the inner-loop of \cref{fig:overviewEx}, we have $p_i^n(c) = c - b n, p_i^n(k) = k + n, p_i^n(b) = b$. 
This solution represents the value of the program variables $c$, $k$, and $b$ after $n$ iterations of the loop. We can then obtain polynomial invariants by eliminating $n$. For our running example, we have $U_i^* = \gideal{c' - c + b(k' -k), b' - b}$.
Since $U_i$ happens to coincide with $\tuple{c - p_i(c), k - p_i(k), b - p_i(b)}$, computing
$U_i^*$ is essentially the same process as \citet{TACAS:Kovacs2008,KAUERS2008}'s complete invariant generation for solvable polynomial maps; \cref{sec:closure} shows how these ideas can be extended to solvable transition ideals in general.

The final step of our iteration operator (Step~\hyperref[step4]{4}) is to translate $U_i^*$ back to a transition formula, $\fstar{F(X, X')}$. For the inner-loop of \cref{fig:overviewEx}, the transition ideal $U_i^*$ translates to the transition formula $\fstar{F_i} \defeq c' - c - b(k'-k) = 0 \land b' - b = 0$. $\fstar{F_i}$ is our summary for the inner loop.

The example analysis of the inner loop of \cref{fig:overviewEx} gives the basic outline of how our method analyzes a loop. However, because the body of the inner loop implements a solvable polynomial map, Step 2 of \cref{fig:overviewDiag} was trivial. To understand the general case when a loop's body is \emph{not} described by a solvable polynomial map, consider the outer loop of \cref{fig:overviewDiag}. 
Let $F_o$ be a transition formula describing the outer-loop body, and let $T_o$ be a transition ideal obtained from $F_o$. $T_o$ contains many polynomials that do \emph{not} represent solvable assignments. For example, because the result of analysis of the inner loop $\fstar{F_i}$ is an approximation of the inner loop, the variable $k$ is updated non-deterministically in $T_o$; i.e., there is no $k' - p \in T_o$ for any polynomial $p$. Furthermore, $q$ has a non-linear self-dependence, i.e. $q' - p +qk' \in T_o$, which is not solvable. These complications mean that we cannot capture the dynamics of the variables of the outer loop using solvable polynomial maps.

However, we can find some \emph{terms} that evolve predictably. For example, $b$ and $c$ are always equal in the post-state, i.e. $b' - c' \in T_o$, and the sign of $qr - ps$ flips between the pre-state and post-state, i.e. $(q'r' - p's') + (qr - ps)\in T_o$. The evolution of these terms can be represented with a solvable transition ideal. \Cref{fig:outerloop-U} illustrates how the evolution of these terms for a single loop iteration can be represented by the solvable transition ideal $U'_o$. 
\begin{figure}[h]
    \begin{subfigure}{0.45\textwidth}
        \[
        \tuple{
        \left\{\begin{array}{rl}
        d &\mapsto qr - ps\\
        e &\mapsto b - c\\
        \end{array}\right\},
        \left(\begin{array}{l}
        d' + d\\
        e'\\
            \end{array}\right)}
        \]
        \caption{A solvable abstraction, $\tuple{u,U_o'}$, of $T_o$ \label{fig:outerloop-U}}
    \end{subfigure}
        \begin{subfigure}{0.45\textwidth}
        \[
                \left(\begin{array}{l}
        (q'r' - p's') + (qr - ps)\\
        (b' - c')
            \end{array}\right)
        \]
        \caption{Image of $U_o'$ under $\dvext{u}$ \label{fig:outerloop-image}}
    \end{subfigure}
    \caption{An abstraction of the outer-loop transition ideal \label{fig:outer-loop-sim}}
\end{figure}
Because $U'_o$ represents terms that are in $T_o$, the pair $\tuple{u, U'_o}$ \emph{abstracts} $T_o$. Let $Y$ be the set of variables $\set{d, e}$ and recall $X$ is used for the set of program variables. The variable $d$ in $U_o'$ represents the polynomial term $qr-ps$ and the variable $e$ represents the linear term $b-c$. This connection between the variables of $U_o'$ and the terms of $T_o$ is captured by the \emph{simulation} $u:\mathbb{Q}[Y]\rightarrow \mathbb{Q}[X]$, the polynomial homomorphism defined by $u(d) = qr-ps$ and $u(e) = b-c$. 
$\tuple{u, U_o'}$ is a sound abstraction of $T_o$ in the sense that
$\dvext{u}[U_o']$ is contained in $T_o$, where $\dvext{u}[U_o']$ denotes the image of $U_o'$ under the homomorphism $u$ extended to ``primed'' vocabulary by defining $\overline{u}(d) = u(d)'$ and $\overline{u}(e) = u(e)'$.

While $\tuple{u, U_o'}$ is \emph{an} abstraction of $T_o$, there could be other abstractions of $T_o$ that are better. For example, there are other polynomial terms that behave predictably that are not in $\dvext{u}[U_o']$, e.g. $(a's' - c'r') + (as - cr) - (br - cr)\in T_o$. Other abstractions may consider terms that are not captured by $\tuple{u, U_o'}$.
However, the techniques of \cref{sec:reflection} does not just extract \emph{a} sound abstraction, but actually extracts a \emph{best abstraction}, with respect to a class of simulations. We call such a best abstraction a \emph{solvable reflection}\footnote{The name solvable reflection is derived from the notion of a reflective subcategory in category theory.}. Informally, a solvable reflection is best in that any other abstraction also abstracts the solvable reflection. In \cref{sec:reflection-computation} we give an algorithm for producing a solvable reflection with respect to linear simulations. 
For the case of linear simulations, $\tuple{v, V}$ is a solvable reflection, with $V = \gideal{e'}\subseteq\mathbb{Q}[e, e']$, $v(e) = b-c$. In other words, capturing the dynamics of the linear term $b - c$ is the best among all possible abstractions of linear terms with solvable transition ideals. 

In \cref{sec:polynomial-simulations}, we extend our algorithm for finding linear simulations and give a method for producing solvable reflections with respect to polynomial simulations of a bounded degree. 
The simulation $u$ from \cref{fig:outer-loop-sim} is an example of a degree-2 simulation, i.e. the mapping for the variable $d$ is a degree-2 polynomial. Our extended method is able to produce the solvable reflection, $\tuple{t, U_o}$, with respect to degree-2 simulations, of $T_o$. $\tuple{t, U_o}$ is too big to be presented here; however, it necessarily captures more dynamics of the outer loop compared to $\tuple{u, U_o'}$. Furthermore, for this example, the closure, $U_o^*$, when combined with the program's initial conditions is strong enough to prove the two assertions at the end of the program, verifying the program in \cref{fig:overviewEx}.

The key that makes the overall process \emph{monotone} is the combination of best abstractions with complete invariant generation for solvable transition ideals. In other words, at every loop we are finding the strongest loop-body invariant that we know how to completely solve. This leads to the result that our iteration operator is monotone (\cref{sec:summary}). Moreover, in the case when the loop body is described by a solvable polynomial map, similar to the case of the analysis of the inner-loop, our method essentially reduces to prior methods. Consequently our method is complete in such a case.

\section{Background} \label{sec:background}

\subsection{Polynomials, Ideals, and Gr\"obner Bases}
We use $\mathbb{Q}[z_1, \dots, z_n]$ and $\mathbb{Q}[Z]$ to denote the ring of polynomials with rational coefficients over the variables $\set{z_1, \dots, z_n} = Z$.
A \textbf{polynomial homomorphism} is a ring homomorphism $f : \mathbb{Q}[X] \rightarrow \mathbb{Q}[Y]$ between two polynomial rings.  Provided that $X$ is finite, a polynomial homomorphism can be represented by its action on the variables $X$.
We say that $f$ is \textbf{linear} if for each $x \in X$, $f(x)$ is either 0 or a homogeneous polynomial of degree 1.  We say that $f$ is a \textbf{polynomial endomorphism} if $X = Y$. In this paper, every polynomial ring we consider is over a finite set of variables.

Next, we highlight standard definitions for polynomial ideals. For a more in depth presentation of these topics, \citet{Book:CLO2015} provides a good introduction.
A polynomial \textbf{ideal} $I\subseteq \mathbb{Q}[Z]$ is a set that contains $0$, is closed under addition, and for any $p\in I$ and $q\in \mathbb{Q}[Z]$, $pq\in I$. Intuitively, one can consider an ideal $I$ a collection of polynomial equations $\set{p=0 : p\in I}$. The conditions of an ideal can be read as inference rules: $0 = 0$, if $p=0$ and $q = 0$ then $p+q = 0$, and if $p = 0$ then $pq = 0$. For any collection of polynomials $P\subseteq\mathbb{Q}[Z]$, we use $\gideal{P} \defeq \set{g_1p_1 + \dots + g_lp_l : p_i\in P, g_i\in \mathbb{Q}[Z]}$ to denote the \textbf{ideal generated by} $P$.

A \textbf{monomial} $m$ is a product of variables of the form $m = z_1^{d_1}\dots z_n^{d_n}$. The \textbf{total degree} of $m$ is $d_1+\dots +d_n$. A \textbf{monomial order}, $\ll$, is a total ordering on monomials, such that for any monomial $v$, $1 \ll v$ and if $m \ll n$ then $mv \ll nv$. The \textbf{leading monomial}, $\LM(p)$, with respect to a given monomial order, of a polynomial $p = a_1m_1+\dots + a_nm_n$ is the greatest monomial among $m_1, \dots, m_n$. 
In this paper, we make use of two different types of monomial orders: \textit{graded orders} and \textit{elimination orders}. Graded orders first compare monomials by total degree, with larger degree corresponding to a larger monomial; ties in total degree are broken by some other monomial order. For example, a graded order that breaks ties using a lexicographic ordering on monomials is the \textbf{graded lexicographic order}. 
Let $X \cup Y$ be a partition of the variables $Z$. Let $m = m_xm_y$ and $n = n_xn_y$ be monomials with $m_x$ and $n_x$ containing only $X$ variables, and $m_y$ and $n_y$ only containing $Y$ variables. Let $\ll$ be some monomial order. 
The \textbf{elimination order} $\elimorder{X}$ defines $m\elimorder{X} n$ as either (1) $m_x \ll n_x$ or (2) $m_x = n_x$ and $m_y \ll n_y$.

\begin{example}
    Consider monomials over the variables $x$, $y$, and $z$ with $x$ lexicographically greater than $y$ and $y$ lexicographically greater than $z$. Let $\ll_{\text{grlex}}$ be the graded lexicographic order, and let $\elimorder{\{z\}}$ be the elimination order that eliminates $z$ and uses $\ll_{\text{grlex}}$ for remaining comparisons.
    \begin{itemize}
        \item $z^2\ll_{\text{grlex}} x^2 \ll_{\text{grlex}} x^2z^2 \ll_{\text{grlex}} xy^2z$
        \item $x^2\elimorder{\{z\}} xy^2z \elimorder{\{z\}} z^2 \elimorder{\{z\}} x^2z^2$
    \end{itemize}
\end{example}

Fixing a monomial ordering, $\ll$, every polynomial ideal $I\subseteq\mathbb{Q}[X]$ admits a finite \textbf{Gr\"obner basis} $G\subseteq\mathbb{Q}[X]$, with $\gideal{G} = I$. The exact definition of a Gr\"obner basis is unimportant for this paper. Instead we note the relevant properties of Gr\"obner bases that we need. Given a Gr\"obner basis, $G =\set{g_1, \dots, g_k}$ (w.r.t $\ll$) for an ideal $I$, every polynomial $p\in \mathbb{Q}[X]$ can be written with respect to $G$ as $p = c_1g_1+\dots+c_kg_k + r$, with $c_1, \dots, c_k, r\in \mathbb{Q}[X]$ such that:
\begin{itemize}
    \item $r$ is the unique polynomial with: (1) no term of $r$ is divisible by any $\LM(g_1), \dots, \LM(g_k)$, and (2) there is a $g\in I$ such that $p = g+r$. Consequently, $r$ has the property that $\LM(r)\ll \LM(r')$ for any $r'\in \mathbb{Q}[X]$ and $g' \in I$ with $p = g' + r'$.
    \item $\LM(c_ig_i)\ll \LM(p)$ for $i = 1, \dots, k$ and $\LM(r)\ll \LM(p)$.
\end{itemize}
Given a finite set of polynomials $P$ there are algorithms \cite{SIGSAM:Buchberger1976,FAUGERE199961} for computing a Gr\"obner basis of $\gideal{P}$. Furthermore, given a Gr\"obner basis, $G$, there are algorithms for rewriting a polynomial by $G$. 

\begin{example}
    Consider the graded lexicographic order over the variables $x$, $y$, and $z$. We will not go through the steps to calculate it but $\{z^2 - 1, x - 2, y + z\}$ is a Gr\"obner basis for the ideal $\gideal{y^2 - 1, yz + 1, xz^2 - 2}$. Thus, $\gideal{z^2 - 1, x - 2, y + z}=\gideal{y^2 - 1, yz + 1, xz^2 - 2}$.

    $xz^2 + x^2 - y$ can be written with respect to the Gr\"obner basis $\gideal{y^2 - 1, yz + 1, xz^2 - 2}$ as
    \[
    xz^2 + x^2 - y = x\underbrace{(z^2 - 1)}_{g_1} + (x+3)\underbrace{(x-2)}_{g_2} - \underbrace{(y+z)}_{g_3} + \underbrace{z+6}_r.
    \]
\end{example}

The combination of Gr\"obner bases and elimination orderings result in the \textbf{key property of elimination orderings}: Let $X$ and $Y$ be disjoint sets of variables, and let $G \subseteq \mathbb{Q}[X,Y]$ be a Gr\"{o}bner basis for $\gideal{G}$ w.r.t. $\elimorder{X}$. Then
$\gideal{G \cap \mathbb{Q}[Y]} = \gideal{G} \cap \mathbb{Q}[Y]$. This key property is critical to many of our algorithms and arguments.

\begin{example}
    With respect to the graded lexicographic order over the variables $x$ and $y$, $G = \{x^3 - y^2 - 1, y^3 - 4x^2 + y, xy - 4\}$ is a Gr\"obner basis for the ideal $\gideal{G}$. However, with respect to the elimination order that eliminates $x$, $\elimorder{\{x\}}$, $G' = \{16x - y^4 - y^2, y^5 + y^3 - 64\}$ is a Gr\"obner basis for the ideal $\gideal{G}$. By the key property of elimination orderings $\gideal{G} \cap \mathbb{Q}[y] = \gideal{G'} \cap \mathbb{Q}[y] = \gideal{G'\cap \mathbb{Q}[y]} = \gideal{y^5 + y^3 - 64}$. Informally, $\gideal{y^5 + y^3 - 64}$ is the ideal $\gideal{G}$ with the variable $x$ ``projected out''.
\end{example}

Let $X$ and $Y$ be finite set of variables, let $P \subseteq \mathbb{Q}[X]$ be a set of polynomials, and let $f : \mathbb{Q}[Y] \rightarrow \mathbb{Q}[X]$ be a polynomial homomorphism.  Then the inverse image 
$f^{-1}[\gideal{P}] \defeq \set{ q \in \mathbb{Q}[Y] : f(q) \in \gideal{P} }$ of $\gideal{P}$ under $f$ is an ideal of $\mathbb{Q}[Y]$, and it can be computed as follows.
Without loss of generality, we assume $X$ and $Y$ are disjoint.
Let $G$ be a Gr\"{o}bner basis for the ideal generated by
$P \cup \set{ y - f(y) : y \in Y }$, with respect to an elimination order $\elimorder{X}$. Define
 $\textit{inv.image}(f,P) \defeq G \cap \mathbb{Q}[Y]$.

 \begin{lemmarep}[Inverse image] \label{lem:inverse-image}
     Let $X$ and $Y$ be finite sets of variables, let $P \subseteq \mathbb{Q}[X]$ be a set of polynomials, and let $f : \mathbb{Q}[Y] \rightarrow \mathbb{Q}[X]$ be a polynomial homomorphism.  Then we have
     \[ \gideal{\textit{inv.image}(f,P)} = f^{-1}[\gideal{P}] \ . \]
 \end{lemmarep}
 \begin{appendixproof}
    Suppose that $G$ be a Gr\"{o}bner basis for the ideal generated by
$P \cup \set{ y - f(y) : y \in Y }$, so that
$\textit{inv.image}(f,P) = G \cap \mathbb{Q}[Y]$.
Observe that:
\begin{enumerate}
    \item[(O1)] For each $q \in \mathbb{Q}[Y]$, we have $q - f(q) \in \gideal{G}$ (by induction on $q$)
    \item[(O2)] $\gideal{G} \cap \mathbb{Q}[X] = \gideal{P}$: Without loss of generality, we may suppose that $P$ is a Gr\"{o}ber basis for $\gideal{P}$ w.r.t. $\elimorder{Y}$.  Then
    $P \cup \set{ y - f(y) : y \in Y }$ is also Gr\"{o}bner basis w.r.t. $\elimorder{Y}$, and so $\gideal{G} \cap \mathbb{Q}[X] = \gideal{G \cap \mathbb{Q}[X]} = \gideal{P}$ by the key property of elimination orderings.
\end{enumerate}

We show $\gideal{\textit{inv.image}(f,P)} = f^{-1}[\gideal{P}]$ by proving inclusion in both directions.
\begin{itemize}
\item[$\subseteq$]  It is sufficient to show that for all $q \in \gideal{\textit{inv.image}(f,P)}$, we have $f(q) \in \gideal{P}$. Suppose $q \in \gideal{\textit{inv.image}(f,P)}$. Then $q\in \gideal{G}$ and $q \in \mathbb{Q}[Y]$. Since $q\in \mathbb{Q}[Y]$,
  we have $q - f(q) \in \gideal{G}$ by observation (O1). Since $q - f(q) \in \gideal{G}$ and  $q\in \gideal{G}$, we have $q - (q - f(q)) = f(q) \in \gideal{G}$.  From $f(q) \in \gideal{G} \cap \mathbb{Q}[X]$, we may conclude $f(q) \in \gideal{P}$ by observation (O2).
\item[$\supseteq$] Suppose $q \in f^{-1}[\gideal{P}]$.   By observation (O1), $q - f(q) \in \gideal{G}$, and since $f(q) \in \gideal{P} \subseteq \gideal{G}$ by assumption, we have $q = q - f(q) + f(q) \in \gideal{G}$.  Since $G$ is a Gr\"{o}bner basis for $\gideal{G}$ w.r.t. $\elimorder{X}$, we have
$\gideal{G} \cap \mathbb{Q}[Y] = \gideal{G \cap \mathbb{Q}[Y]} = \gideal{\textit{inv.image}(f,P)}$ by the key property of elimination orderings. \qedhere
\end{itemize}
 \end{appendixproof}
 
\begin{example}
    Let $X = \{x, y\}$ and $Y = \{a, b\}$. Let $f : \mathbb{Q}[Y] \rightarrow \mathbb{Q}[X]$ be the polynomial homomorphism defined by $f(a) = x^2 + y^2$ and $f(b) = xy$, and let $P\subseteq\mathbb{Q}[X]$ be the set $\{x+y + 1\}$. A Gr\"obner basis for the ideal generated by $\{x + y + 1\} \cup \{a - (x^2 + y^2), b - xy\}$, with respect to an elimination order $\elimorder{X}$ is $G = \{a + 2b - 1, b + y^2 + y, x + y + 1\}$. $G\cap \mathbb{Q}[Y] = \{a + 2b - 1\}$. Thus, by \cref{lem:inverse-image}, $\gideal{a + 2b - 1} = f^{-1}[\gideal{x + y + 1}]$. We can easily verify one direction of the equality by observing that $f(a + 2b - 1) = x^2 + y^2 + 2xy - 1 = (x+y-1)(x + y + 1) \in \gideal{x + y + 1}$.
\end{example}

 Given two ideals $\gideal{I}=A\subseteq \mathbb{Q}[X]$ and $\gideal{J} = B\subseteq \mathbb{Q}[X]$, $A\cap B$ is also an ideal of $\mathbb{Q}[X]$. A basis for $A\cap B$ can be computed from $I$ and $J$ using Gr\"obner basis techniques. $A + B\subseteq\mathbb{Q}[X]$ is an ideal and represents the set $\set{a + b : a \in A, b\in B}$. $A + B$ is the smallest ideal containing $A$ and $B$, and $A+B = \gideal{I\cup J}$. For any ideal $I\subseteq\mathbb{Q}[X]$ and polynomial $p\in \mathbb{Q}[X]$, we denote the set $\set{q : p - q\in I}$ (equivalently, $\set{p+q : q\in I}$) as $p+I$. We use $\mathbb{Q}[X]/I$ to denote the ring with carrier $\set{p + I : p\in \mathbb{Q}[X]}$, with addition and multiplication lifted to sets.

\subsection{Commutative Algebra}
Define a \textbf{$\mathbb{Q}$-algebra} to be a commutative algebra over $\mathbb{Q}$; that is,
an algebraic structure that is both a commutative ring and a linear space over $\mathbb{Q}$.  Examples of $\mathbb{Q}$-algebras include $\mathbb{Q}$ itself, the field of algebraic numbers $\bar{\mathbb{Q}}$, $\mathbb{Q}[X]$, and
$\mathbb{Q}[X]/I$ for any set of variables $X$ and ideal $I \subseteq \mathbb{Q}[X]$.
For any set of variables $X$, a $\mathbb{Q}$-algebra $A$ defines an algebra homomorphism
$(-)^A : \mathbb{Q}[X] \rightarrow (A^X \rightarrow A)$, where $x^A(v) = v(x)$.\footnote{$\mathbb{Q}[X]$ is the free $\mathbb{Q}$-algebra generated by $X$; $(-)^A$ is the ``evaluation'' function that we get from freeness.}
For any set $S \subseteq A^X$, define the vanishing ideal of $S$ to
be \[\I{A}{S} \defeq \set{ p \in \mathbb{Q}[X] : p^A(v) = 0\ \text{for every}\ v\in S}\ .\]
Observe that for any polynomial endomorphism $f : \mathbb{Q}[X] \rightarrow \mathbb{Q}[X]$ and any $\mathbb{Q}$-algebra $A$, $f$ defines a function $f_A : A^X \rightarrow A^X$ by
  $f_A(v) = \lambda x. f(x)^A(v)$. For example, let $A = \mathbb{Q}[w]$, $X = \{x, y\}$, and $v = \{x \mapsto 2, y \mapsto w\} \in A^X$. Then let $f : \mathbb{Q}[x, y] \rightarrow \mathbb{Q}[x, y]$ be the polynomial homomorphism defined by $f(x) = 2x$ and $f(y) = y+1$. Then, $f_A(v) = \{x \mapsto 4, y \mapsto w + 1\}$.

Note that for any $\mathbb{Q}$-algebra $A$, the set of infinite sequences over $A$, $A^\omega$, is also a $\mathbb{Q}$-algebra. The multiplication and addition operations of $A^\omega$ are defined pointwise. Let $0_A$ and $1_A$ be the additive and multiplicative unit of $A$, then $\seq{0_A}{n}$ and $\seq{1_A}{n}$ are the additive and multiplicative unit of $A^\omega$. The scalar multiplication operation of $A^\omega$ is defined as applying the scalar multiplication of $A$ to each element of the infinite sequence.

For a $\mathbb{Q}$-algebra $A$, and a set $G \subseteq A$,
we use $\gspace{G}$ to denote the smallest subspace of $A$ that contains $G$, and $\galg{G}$ to denote the smallest $\mathbb{Q}$-subalgebra of $A$ that contains $G$.

\subsection{C-finite Recurrences}\label{sec:c-finite}
Let $A$ be a $\mathbb{Q}$-algebra. A sequence $\{a(n)\}_{n=0}^\infty \in A^\omega$ is \textbf{c-finite} if it satisfies a c-finite recurrence. A \textbf{c-finite recurrence} has the form:\footnote{Here we present c-finite recurrences in \emph{homogeneous} form. \emph{Inhomogeneous} c-finite recurrences can have an additional additive constant $c_{d+1}$ at the end of \cref{Eq:c-finite}. Any inhomogeneous c-finite recurrence can be transformed to a homogeneous c-finite recurrence of order 1 higher, and so no power is lost when only considering the homogenous form.}
\begin{equation}\label{Eq:c-finite}
    a(n) = c_1a(n-1) + \dots + c_da(n-d)
\end{equation}
for constants $c_i\in \mathbb{Q}$, for all $n \geq d$. Given a recurrence of the form from \cref{Eq:c-finite} the \textbf{order} of the recurrence is $d$. The \textbf{characteristic polynomial} of a c-finite recurrence of the form from \cref{Eq:c-finite} is $p(x) = x^d - c_1x^{d-1} - \dots -c_{d-1}x - c_d\in \mathbb{Q}[x]$. The Fibonacci sequence, $\seq{F(n)}{n}$, is a classical example of a c-finite sequence over the $\mathbb{Q}$-algebra $\mathbb{Q}$, which satisfies the order 2 recurrence $F(n) = F(n-1) + F(n-2)$. The characteristic polynomial of the Fibonacci recurrence is $p_{\text{Fib}}(x) = x^2 - x - 1$.

Every c-finite recurrence admits a closed-form as a polynomial-exponential \cite{everest2003recurrence}. More specifically, given a recurrence of the form from \cref{Eq:c-finite} with $c_i\in \mathbb{Q}$ and $d$ initial values $a(0), \dots, a(d-1)$ from some $\mathbb{Q}$-algebra $A$, then
\begin{equation}\label{Eq:c-finiteClosedForm}
   a(n) = \sum_{i=1}^d\left(z_i(n) + \sum_{j = 1}^d p_{ij}(n) \Theta_j^n\right)a(i-1)
\end{equation}
where each $\Theta_i$ is a complex root of the characteristic polynomial of the recurrence, each $p_{ij}\in \mathbb{\bar{Q}}[x]$\footnote{More specifically, each polynomial $p_{ij}$ has coefficients in the splitting field of the characteristic polynomial, $\mathbb{Q}(\Theta_1, \dots, \Theta_d)$}, and $z_i(n) : \mathbb{N} \rightarrow \mathbb{Q}$ with $z_i(k) = 0$ for any $k\ge d$. More specifically, $z_i(k) = 0$ for any $k$ greater than or equal to the multiplicity of 0 as a root of the characteristic polynomial. If $0$ is not a root of the characteristic polynomial then the terms $z_i(k) = 0$ for all $k\in \mathbb{N}$ and can be omitted from the closed-form. Determining such a closed-form from a recurrence is referred to as ``solving'' the recurrence. The roots of the Fibonacci characteristic polynomial $x^2 - x - 1$ are $\phi = \frac{1+\sqrt{5}}{2}$ and $\psi = \frac{1 - \sqrt{5}}{2}$. Assuming, $F(0) = 0$ and $F(1) = 1$, a solution to the Fibonacci recurrence in the form of \cref{Eq:c-finiteClosedForm} is Binet's formula $F(n) = \frac{1}{\sqrt{5}}\phi^n - \frac{1}{\sqrt{5}}\psi^n$.

A polynomial endomorphism $f : \mathbb{Q}[X] \rightarrow \mathbb{Q}[X]$ is
\textbf{solvable} if there exists a partition $X = X_1 \cup \dots \cup X_n$ of
$X$ (with $X_i \cap X_j = \emptyset$ for all $i \neq j$) such that for each
$X_i$ and each $x \in X_i$, $f(x)$ can be written as $g(X_i) + h(X_1,\dots,
X_{i-1})$, where $g$ is a linear polynomial in the variables $X_i$ and $h$ is
a polynomial (of arbitrary degree) in the variables $X_1 \cup \dots \cup
X_{i-1}$.

C-finite recurrences are equivalent to solvable polynomial maps in the sense that each solvable polynomial map $f:\mathbb{Q}[X]\rightarrow\mathbb{Q}[X]$ defines $|X|$ c-finite sequences $\{f^i(x_1)\}_{i=0}^\infty, \dots, \{f^i(x_n)\}_{i=0}^\infty$, each of order $|X|$ \cite[Section 8]{POPL:KCBR2018}. Conversely, each c-finite recurrence $a(n) = c_1a(n-1) + \dots + c_da(n-d)$ can be transformed to a solvable map $f$ over $d$ variables, $a_n, \dots, a_{n-d}$ as the homomorphism defined by $f(a_n) = c_1a_{n-1} + \dots + c_da_{n-d}$ and $f(a_{n-i}) = a_{n-i+1}$ for $0< i \le d$. Due to this equivalence, solvable polynomial maps can effectively be ``solved'' in the form of \cref{Eq:c-finiteClosedForm} in the same way as c-finite recurrences. Hence the name \emph{solvable} polynomial map.

\subsection{Transition Formulas and Linear Integer/Real Rings}

Fix a set of program variables $X$.  We use $X' = \set {x' : x \in X}$ to
denote a set of ``primed'' copies of variables in $X$ (presumed disjoint from
$X$).  We use $(-)'$ to denote the homomorphism $\mathbb{Q}[X] \rightarrow
\mathbb{Q}[X']$ that maps each $x$ to its primed copy $x'$.

A \textbf{transition formula} is a formula $F$ with free variables in $X$ and $X'$ (in some first-order language), with the unprimed variables representing the pre-state of some computation, and the primed variables representing the post-state.
For transition formulas $F_1$ and $F_2$, we use $F_1 \circ F_2 \defeq \exists X''. F_1[X' \mapsto X''] \land F_2[X \mapsto X'']$ to denote the sequential composition of $F_1$ and $F_2$.  For any transition formula $F$ and natural number $n$, we use $F^n \defeq F \circ \dots \circ F$ ($n$ times) to denote the $n$-fold sequential composition of $F$ with itself.

Within this paper, we shall assume that transition formulas are expressed
 in the existential fragment of the language of non-linear mixed integer/real arithmetic (that is, the language of rational constants, addition, multiplication, an order relation, and an integrality predicate).
Although this language is undecidable over the \textit{standard} model,
\citet{POPL:KKZ2023} showed that ground satisfiability is decidable if we allow more general interpretations, namely over linear integer/real rings (\textbf{LIRR}).  For our purposes, we may think of linear integer/real rings as $\mathbb{Q}$-algebras that satisfy some additional axioms concerning the order relation and integrality predicate, which are not relevant to this paper.  We will assume $\textbf{LIRR}$ as a background theory in the remainder of the paper, and use $F \models_{\LIRR} G$ to denote that the formula $F$ entails the formula $G$ modulo $\LIRR$.

In addition to satisfiability being decidable, there is a procedure \cite{POPL:KKZ2023} for computing
the vanishing ideal $\I{\LIRR}{F}$ of a formula $F$ in the existential fragment of the language: the ideal of all polynomials $p$ such that $F \models_{\LIRR} p = 0$. See \cref{Ex:idealFromForm} for an example of $\I{\LIRR}{F}$ from a formula $F$. For any ideal $I$ generated by polynomials $p_1,\dotsi,p_n$, we use $\iformula{I}$ to denote the formula $p_1 = 0 \land \dotsi \land p_n = 0$.  The choice of generators for $I$ is irrelevant in the sense that if two sets of polynomials $P$ and $Q$ generate the same ideal, then
$\bigwedge_{p \in P} p = 0$ and $\bigwedge_{q \in Q} q = 0$ are equivalent modulo $\LIRR$.  Note that $\mathbf{I}_{\LIRR}$ and $\mathbf{F}$ form a Galois connection: for any transition formula $F$ and ideal $I$ over the free variables of $F$, we have
$F \models_{\LIRR} \iformula{I}$ if and only if $\I{\LIRR}{F} \supseteq I$.  This implies that (1) for formulas $F$ and $G$, if $F \models_{\LIRR} G$, then $\I{\LIRR}{F} \supseteq \I{\LIRR}{G}$, and (2) for ideals $I$ and $J$ if $I \supseteq J$, then $\iformula{I} \models_{\LIRR} \iformula{J}$.

\subsection{Transition Ideals}\label{sec:transition-ideals}
The main results of this paper are concerned with \textit{transition ideals}.  A \textbf{transition ideal} is an ideal in the ring $\mathbb{Q}[X,X']$ for some set of variables $X$.  Transition ideals are not tied to the theory $\LIRR$, but can be seen as the vanishing ideals of transition formulas, and their operations can be understood in terms of corresponding operations on transition formulas.

For transition ideals $T_1$ and $T_2$, define \[T_1 \imul
T_2 \defeq \left(T_1[X' \mapsto X''] + T_2[X \mapsto X'']\right) \cap
\mathbb{Q}[X,X']\ .\]  
Equivalently, $T_1 \imul T_2$ is equal to $\I{\LIRR}{\iformula{T_1} \circ \iformula{T_2}}$.
For any transition ideal $T$ and natural number $n$,
define 
\[ T^0 = \langle \{x' - x : x\in X\}\rangle \qquad \iexp{T}{n} \defeq \underbrace{T \imul \dots \imul T}_{n \text{ times}} \qquad T^* \defeq \bigcap_{i=0}^\infty \iexp{T}{i}.\]
Note that, since the polynomials in a transition ideal are interpreted as \textit{constraints}, $T^*$ can be interpreted as the set of constraints that are common to all $T^i$.  In particular, if $F$ is a transition formula, then for any $n \in \mathbb{N}$, we have $F^n \models_{\LIRR} \iformula{(\I{\LIRR}{F})^*}$.

\begin{example}\label{ex:transitionIdeal}
    $T = \gideal{w' - wy, x' - 2x - y^2, y' - y - z, z' - 3z, y - z - 1}$ is an example of a transition ideal. $T^0 = \gideal{w' - w, x' - x, y' - y, z' - z}$, $T^1 = T$, and
    \begin{align*}
        T^2 = 
        \langle &w' - wy(y+z), x' - 2(2x+y^2) - (y+z)^2, y' - (y+z) - 3z, z' - 3(3z),\\ 
        &(y+z) - 3z - 1, y - z - 1 \rangle\\
        = \langle &w' - wy^2 - wz, x' - 4x - 3y^2 - 2yz - z^2, y' - y - 4z, z' - 9z, y - 2z - 1, y - z - 1\rangle\\
        = \langle &w' - w, x' - 4x - 3, y' - 1, z', y - 1, z\rangle\\
    \end{align*}
\end{example}

Define the \textbf{domain} of $T$ to be $\dom(T) \defeq T \cap \mathbb{Q}[X]$, and the
\textbf{invariant domain} of $T$ to be $\dom^*(T) = \sum_{n=0}^\infty
\dom(T^n)$. Informally, if we think of a transition ideal as a set of polynomial equations constraining the transition between a pre-state $X$ and post-state $X'$, then the domain of $T$ is the set of constraints that must be satisfied by a pre-state in order to have a successor. Pre-states that satisfy the invariant domain of $T$ are those states which have arbitrarily long computations described by $T$.
Given a transition ideal $T$, $\dom^*(T)$ can be calculated as follows. By inspection of the definition of $T^n\cdot T$, we see $\dom(T^n) \subseteq \dom(T^{n+1})$ for any $n\ge 1$. Therefore, we have the ascending chain of ideals:
\[
\dom(T) \subseteq \dom(T^2) \subseteq \dom(T^3) \subseteq \dots.
\]
By Hilbert's basis theorem this chain must stabilize at some $N$. That is, there exists $N\ge 1$ such that $\dom(T^N) = \dom(T^{N+1}) = \dom^*(T)$. 

We say that $T$ is \textbf{solvable} if there is a solvable
 $p : \mathbb{Q}[X] \rightarrow \mathbb{Q}[X]$ such that $x' - p(x) \in T$ for all $x \in X$; call $p$ a \textbf{solvability witness} for $T$.
We say that $T$ is \textbf{ultimately solvable} if $T + \dom^*(T)$ is solvable.

\begin{example}\label{ex:solvabelTransitionIdeal}
    Recall \cref{ex:transitionIdeal}, with $T = \gideal{w' - wy, x' - 2x - y^2, y' - y - z, z' - 3z, y - z - 1}$. $\dom(T) = \gideal{y - z - 1}$. For this example, the invariant domain stabilizes at $N = 2$, and so $\dom(T^2) = \dom^*(T) = \gideal{y - 1, z}$. $T$ is \emph{not} a solvable transition ideal. This is because the dynamics of $w$, $w\mapsto wy$ cannot be captured by a solvable polynomial map.

    $T' = \gideal{x' - 2x - y^2, y' - y - z, z' - 3z, y - z - 1}$ is a solvable transition ideal recognized by the solvability witness $p:\mathbb{Q}[x, y, z] \rightarrow\mathbb{Q}[x, y, z]$ defined by $p(x) = 2x + y^2$, $p(y) = y+z$, and $p(z) = 3z$. Note that even though there is a non-linear dependence on the variable $y$ for the assignment $p(x)$, $p$ is still solvable using the variable partition $\set{y, z} \cup \set{x}$.

    $T$ is an example of an ultimately solvable transition ideal. $T + \dom^*(T) =\langle w' - w, x' - 2x - 1, y' - 1, z', y - 1, z \rangle$ is solvable, and is recognized by the solvability witness $q:\mathbb{Q}[w, x, y, z]\rightarrow\mathbb{Q}[w, x, y, z]$ defined by $q(w) = w$, $q(x) = 2x + 1$, $q(y) = 1$, and $q(z) = 0$.
\end{example}

Let $T \subseteq \mathbb{Q}[X,X']$ and $U \subseteq \mathbb{Q}[Y,Y']$ be
transition ideals (possibly over different variables). 
A polynomial homomorphism $s : \mathbb{Q}[Y] \rightarrow \mathbb{Q}[X]$ is a \textbf{simulation} from $T$ to $U$ (notice reversal of direction) if for every polynomial $p \in U$, we have $\dvext{s}(p) \in T$,
where $\dvext{s} : \mathbb{Q}[Y,Y'] \rightarrow \mathbb{Q}[X,X']$ denotes the extension of $s$ to the ``doubled'' vocabulary, which maps each $x \in X$ to $s(x)$ and each $x' \in X'$ to $s(x)'$. We say that $s$ is a \textbf{linear simulation} if it is linear and a simulation.
\newcommand{\simcompose}[2]{#1 ; #2}
If $T \subseteq \mathbb{Q}[X,X']$, $U \subseteq \mathbb{Q}[Y,Y']$,
  and $V \subseteq \mathbb{Q}[Z,Z']$ are transition ideals and $s : T \rightarrow U$ and $t : U \rightarrow V$ are simulations, then their composition $\simcompose{t}{s} \defeq s \circ t$ is a simulation from $T$ to $V$ (again noting that simulations go in the opposite direction of polynomial homomorphisms).


\section{Solvable Reflections} \label{sec:reflection}

In this section, we show that every transition ideal $T \subseteq \mathbb{Q}[X,X']$ admits a \textit{solvable reflection}: there is a solvable transition ideal $U$ and a linear simulation $s : T \rightarrow U$ that is a closer approximation of $T$ than any other simulating solvable transition ideal.

\begin{example}
\Cref{fig:reflection} illustrates a transition ideal $T$ that will be used as a running example throughout this section.  One may think of $T$ as the polynomial map 
\[ f(w,x,y,z) = (z^2, 4x(1-x), -4x^2 + 3x + y - 1, z + x^2 - 2xy + y^2) \]
(corresponding to the first four generators of $T$) restricted to the domain $w - z^2 - 1$ (corresponding to the fifth).  The map $f$ is not solvable, since $x$ exhibits a non-linear self-dependence
(in fact, $x \mapsto 4x(1-x)$ is a logistic map, a famous example in chaos theory that \citet{logistic} suggested as the basis of a pseudo-random number generator). If we restrict the transition ideal to the variable $w$, the resulting transition ideal $T_w = \gideal{w' - w + 1}$ \textit{is} solvable, and we can compute a closed form for its $i$th iterate: $T_w^i = \gideal{w' - w + i}$.  This is an instance of a \textit{solvable abstraction} of $T$: $T_w$ is a solvable transition ideal that approximates the dynamics of $T$, and the nature of the approximation is given by the inclusion homomorphism $\mathbb{Q}[w] \rightarrow \mathbb{Q}[w,x,y,z]$ (mapping $w \mapsto w$).

There are other solvable abstractions of $T$ which capture different aspects of $T$'s dynamics. Observe that while it's challenging to understand the dynamics of $x$ or $y$, their \textit{difference} behaves predictably: $T$ contains the polynomial
$(x' - y') - (x - y) + 1$, indicating that the value of $(x-y)$ decreases by 1 at each step. Coincidentally, the dynamics of $(x-y)$ are identical to that of $w$ 
(both decrease by 1), so this information can be represented as the solvable abstraction $\tuple{\set{w \mapsto x-y},T_w}$.

Yet another solvable abstraction $\tuple{s,U}$ is pictured in \cref{fig:reflection}.  This abstraction is more desirable than either $\tuple{\set{w \mapsto w}, T_w}$ or $\tuple{\set{w \mapsto x-y}, T_w}$, in the sense that $\tuple{s,U}$ captures the dynamics of not only $(x-y)$ and $w$, but also $z$.  In fact, $\tuple{s,U}$ is a \textit{solvable reflection} of $T$, in the sense that any other solvable abstraction of $T$ similarly factors through $\tuple{s,U}$.
\end{example}

Formally, a \textbf{solvable reflection} of $T \subseteq \mathbb{Q}[X,X']$ (with respect to linear simulations) is a pair $\tuple{s,U}$ consisting of a transition ideal $U \subseteq \mathbb{Q}[Y,Y']$ (for some set of variables $Y$) and a polynomial homomorphism
$s : \mathbb{Q}[Y] \rightarrow \mathbb{Q}[X]$ such that:
\begin{enumerate}
\item $s$ is a linear simulation from $T$ to $U$
\item $U$ is solvable
\item For any other pair $\tuple{v,V}$ satisfying 1 and 2, there exists a unique linear simulation $\hat{v} : U \rightarrow V$ such that $v = \simcompose{\hat{v}}{s}$.
\end{enumerate}
\Cref{sec:reflection-computation} describes a procedure for computing solvable reflections.  We then extend this result in two ways: (1) \cref{sec:ultimately-solvable} generalizes from solvable to \textit{ultimately} solvable reflections,
and (2) \cref{sec:polynomial-simulations} generalizes from linear simulations to polynomial simulations.

\subsection{Computing Solvable Reflections} \label{sec:reflection-computation}


\begin{figure}
    \begin{subfigure}{0.25\textwidth}
        \[
        \left(\begin{array}{l}
        w' - z^2\\
        x' - 4x(1-x)\\
        y' + 4x^2 - 3x - y - 1\\
        z' - z - x^2 + 2xy - y^2\\
        w - z^2 - 1
            \end{array}\right)
        \]
        \caption{A transition ideal, $T$ \label{fig:reflection-T}}
    \end{subfigure}
    \begin{subfigure}{0.45\textwidth}
        \[
        \tuple{
        \left\{\begin{array}{rl}
        a &\mapsto (x-y)\\
        b &\mapsto w\\
        c &\mapsto z
        \end{array}\right\},
        \left(\begin{array}{l}
        a' - a - 1\\
        b' - b + 1\\
        c' - c - a^2\\
        b - c^2 - 1
            \end{array}\right)}
        \]
        \caption{A solvable reflection $\tuple{s,U}$ of $T$ \label{fig:reflection-U}}
    \end{subfigure}
        \begin{subfigure}{0.25\textwidth}
        \[
                \left(\begin{array}{l}
        (x' - y') - (x - y) + 1\\
        w' - w + 1\\
        z' - z - (x-y)^2\\
        w - z^2 - 1\\
            \end{array}\right)
        \]
        \caption{Image of $U$ under $\dvext{s}$  \label{fig:reflection-image}}
    \end{subfigure}
    \caption{A solvable reflection of a transition ideal \label{fig:reflection}}
\end{figure}

\newcommand{\Stratum}[2]{\textsf{Stratum}\left(#1,#2\right)}

To begin, we give an alternate characterization of solvable transition ideals, which will serve as the basis of our algorithm for computing solvable reflections.
For a transition ideal $T \subseteq \mathbb{Q}[X,X']$ and a set of polynomials $P \subseteq \mathbb{Q}[X]$, define 
\[
\Det{T}{P} \defeq \set{ p \in \gspace{X} : \exists q \in P.  p' - q \in T  }
\]
Intuitively, $\Det{T}{P}$ represents the set of linear functionals that are ``determined up to $P$''; i.e., $T$ constrains the post-state value of $p \in \Det{T}{P}$ to be equal to some $q \in P$.
Observe that $T$ is solvable exactly when there is an ascending chain $X_1 \subset X_2 \dots \subset X_n = X$ of sets of variables (called a \textit{stratification} of $X$) such that for each $i \in \set{1, \dots, n-1}$, we have
$\gspace{X_{i+1}} \subseteq \Det{T}{\gspace{X_{i+1}} + \galg{X_i}}$.
(i.e., for each $x \in X_{i+1}$, we have $x' - p - q \in T$ for some linear polynomial $p \in \gspace{X_{i+1}}$
over the variables $X_{i+1}$ and some polynomial $q \in \galg{X_i}$ over the variables $X_i = X_1 \cup \dots \cup X_i$).

\begin{example}
Consider the solvable transition ideal $U$ in \cref{fig:reflection-U}.  A solvability witness for $U$
is the polynomial homomorphism $f_U$ that sends $a \mapsto a-1$, $b \mapsto b+1$, and $c \mapsto c+a^2$, and the corresponding partition of the variables is $\set{a,b} \cup \set{c}$.  The fact that $f_U$ is solvable corresponds precisely to the facts that
both $f(a)$ and $f(b)$ belong to $\gspace{\set{a,b}} + \galg{\emptyset}$,
and $f(c)$ belongs to $\gspace{\set{a,b,c}} + \galg{\set{a,b}}$.

We may derive from this solvability witness a stratification
$\set{a,b} \subset \set{a,b,c}$ (i.e., the $i$th stratum is the union of the first $i$ cells in the ordered partition).  Observe that
$\Det{U}{\gspace{\set{a,b}} + \galg{\emptyset}} = \gspace{\set{a,b}}$,
and $\Det{U}{\gspace{\set{a,b,c}} + \galg{a,b}} = \gspace{\set{a,b,c}}$.

Furthermore, observe that the simulation $u : T \rightarrow U$ induces a corresponding structure in $T$, namely
$\Det{T}{\gspace{\set{(x-y),w}} + \galg{\emptyset}} = \gspace{\set{(x-y),w}}$ (where $\set{(x-y),w)} = \set{u(a),u(b)}$)
and $\Det{U}{\gspace{\set{(x-y),w,z}} + \galg{(x-y),w}} = \gspace{\set{(x-y),w,z}}$ (where $\set{(x-y),w,z)} = \set{u(a),u(b),u(c)}$.  In fact this stratification structure determines the solvable reflection of $T$ uniquely (up to isomorphism) in the sense that if $v : \mathbb{Q}[d_1,d_2,d_3] \rightarrow \mathbb{Q}[w,x,y,z]$ is any polynomial homomorphism such that
$\set{v(d_1),v(d_2)}$ is a basis for $\gspace{(x-y),w}$ and
$\set{v(d_1),v(d_2),v(d_3)}$ is a basis for $\gspace{(x-y),w,z}$, then
$\tuple{v,\invimage{v}{T}}$ is a solvable reflection of $T$.
The essential idea behind the algorithm presented in this section is to calculate this structure by discovering each stratum in turn.
\end{example}

\newcommand{\stratum}[2]{\mathcal{S}_{#1}(#2)}
\newcommand{\strata}[1]{\mathcal{S}^*(#1)}

\begin{theorem} \label{thm:solvable-reflection-exist}
 Every transition ideal has a solvable reflection.
\end{theorem}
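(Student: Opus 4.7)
The plan is to construct the solvable reflection by iteratively computing a stratification of linear functionals over $X$, mirroring the characterization of solvability given just before the theorem. I will build an ascending chain of subspaces $S_0 \subseteq S_1 \subseteq \cdots$ of $\gspace{X}$, where each $S_{i+1}$ collects the linear functionals whose $T$-determined post-state value decomposes as a linear term in $S_{i+1}$ plus an arbitrary polynomial over the earlier strata $S_i$.

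Concretely, set $S_0 = \set{0}$ and define $S_{i+1}$ to be the largest linear subspace $V$ satisfying $S_i \subseteq V \subseteq \Det{T}{V + \galg{S_i}}$. Such a maximum exists because the family of subspaces with this property is closed under sums: if $V_1$ and $V_2$ both satisfy the inclusion, then every element of $V_1 + V_2$ lies in $\Det{T}{(V_1+V_2) + \galg{S_i}}$, since $\Det{T}{P}$ is closed under addition whenever $P$ is. The maximum is computable by an inner fixed-point starting from $\gspace{X}$ and repeatedly intersecting with $\Det{T}{\cdot + \galg{S_i}}$, and the outer iteration stabilizes at some $S^* = S_N$ by finite dimensionality of $\gspace{X}$.

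Having fixed $S^*$ together with its stratification, I pick a basis $\set{p_1, \ldots, p_k}$ of $S^*$ compatible with the chain (so that initial segments span each $S_i$), introduce fresh variables $Y = \set{y_1, \ldots, y_k}$, and define $s : \mathbb{Q}[Y] \to \mathbb{Q}[X]$ by $s(y_i) = p_i$. Set $U \defeq \invimage{s}{T}$, which is an ideal by \cref{lem:inverse-image}. The homomorphism $s$ is a linear simulation by construction, and the compatible basis supplies a solvability witness for $U$: whenever $p_i \in S_{j+1}\setminus S_j$, a constraint $p_i' - q \in T$ with $q \in S_{j+1} + \galg{S_j}$ pulls back, via $\invimage{s}{\cdot}$, to a defining polynomial of a solvable assignment for $y_i$.

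The main obstacle is universality. Given any other candidate $\tuple{v, V}$ with $V \subseteq \mathbb{Q}[Z,Z']$ solvable and $v : \mathbb{Q}[Z] \to \mathbb{Q}[X]$ a linear simulation, a solvability witness for $V$ yields a stratification $Z_1 \subset \cdots \subset Z_m$ of $Z$. I argue by induction on $i$ that $\gspace{\set{v(z) : z \in Z_i}} \subseteq S_i$: the simulation property sends each defining polynomial of $V$'s witness into $T$, which together with the inductive hypothesis shows that the subspace $W = \gspace{\set{v(z) : z \in Z_{i+1}}} + S_i$ satisfies $S_i \subseteq W \subseteq \Det{T}{W + \galg{S_i}}$, so by maximality $W \subseteq S_{i+1}$. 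Consequently the image of $v$ lies in $S^*$, and $v$ factors uniquely through the basis $\set{p_1, \ldots, p_k}$, producing a unique linear map $\hat v : \mathbb{Q}[Z] \to \mathbb{Q}[Y]$ with $v = \simcompose{\hat v}{s}$. Finally, $\hat v$ is itself a simulation from $U$ to $V$: for any $q \in V$, $\dvext{s}(\dvext{\hat v}(q)) = \dvext{v}(q) \in T$, so $\dvext{\hat v}(q) \in \invimage{s}{T} = U$.
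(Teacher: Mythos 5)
Your proof takes essentially the same approach as the paper's: both define an ascending chain of strata $S_0 \subseteq S_1 \subseteq \cdots$ of $\gspace{X}$, where each $S_{i+1}$ is the greatest (pre)fixed point of $S \mapsto \Det{T}{S + \galg{S_i}}$, pick a basis compatible with the chain to define the simulation, and establish universality by inducting over a competing candidate's stratification. The only notable presentational difference is that you justify the existence of the maximal subspace by observing closure under sums (which implicitly uses monotonicity of $\textsf{Det}$) rather than invoking Knaster--Tarski, and you compress the construction of the solvability witness for $U$ into a remark about pulling back constraints along $\invimage{s}{\cdot}$, where the paper spells out the choice of $t_i \in \gspace{Y_i}$ and $\hat t_i \in \galg{Y_1 \cup \dots \cup Y_{i-1}}$ explicitly; these are cosmetic differences, and the underlying argument is the same.
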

\begin{proof}
Let $T \subseteq \mathbb{Q}[X,X']$ be a transition ideal.  We may calculate the solvable reflection of $T$ as follows.
For each natural number $i$, we define a linear subspace of polynomials $\stratum{i}{T} \subseteq \gspace{X}$ as follows:
\begin{itemize}
\item $\stratum{0}{T} \defeq \set{0}$.
\item For each $i \geq 0$, $\stratum{i+1}{T}$ is the greatest fixed point of $S \mapsto \Det{T}{S + \galg{\stratum{i}{T}}}$.  Such a fixed point always exists by the Knaster-Tarski fixed point theorem, noting that $S \mapsto \Det{T}{S + \galg{\stratum{i}{T}}}$ is a monotone operator on the complete lattice of subspaces of $\gspace{X}$.
\end{itemize}
Since $\stratum{0}{T} \subseteq \stratum{1}{T} \subseteq \dots$ is an ascending chain of subspaces of a finite-dimensional space $\gspace{X}$, it must stabilize (i.e., there is some $n$ such that $\stratum{n}{T} = \stratum{n+1}{T}$).  Call the resulting space $\strata{T}$.

For any $i \in \mathbb{N}$, let $d_i$ be the dimension of $\stratum{i}{T}$.  Choose an ordered basis $p_1,\dots,p_n$ for 
$\strata{T}$ such that for each $i$, $p_1,\dots,p_{d_i}$ spans
$\stratum{i}{T}$ (such a basis may be obtained by choosing an arbitrary basis for $\stratum{0}{T}$ and then extending it to $\stratum{1}{T}$, then extending that basis to $\stratum{2}{T}$, and so on).  
Let $Y = \set{y_{1},\dots,y_{n}}$ be a set of variables disjoint from $X$, and let $u : \mathbb{Q}[Y] \rightarrow \mathbb{Q}[X]$ be the homomorphism that maps $y_{i} \mapsto p_{i}$.  Finally, define $R(T) \defeq \tuple{u, \invimage{u}{T}}$.  We will show that $R(T)$ is a solvable reflection of $T$.

First, we show that $\invimage{u}{T}$ is solvable.  We construct a solvability witness as follows.
For any $i \leq n$, let $s(i)$ be the least number such that $i \leq d_{s(i)}$.
For each $1 \leq k \leq n$, let $Y_k = \set{ y_i : s(i) = k }$.
For any $i$ we have
$p_{i} \in \stratum{s(i)}{T}$, and so
\[ p_{i} \in  \Det{T}{\stratum{s(i)}{T} + \galg{\stratum{s(i)-1}{T}}}
 \]
 It follows that there is some $q_i \in \stratum{s(i)}{T}$ and $\hat{q}_i \in \galg{\stratum{s(i)-1}{T}}$ such that $p_{i}' - q_i - \hat{q}_i \in T$.
 Since $u[\textit{span}(Y_i)] = \stratum{s(i)}{T}$
 and
 $u[\galg{Y_1 \cup \dots \cup Y_{i-1}}] = \galg{\stratum{s(i)-1}{T}}$,
 there is some $t_i \in \textit{span}(Y_i)$
 and $\hat{t}_i \in \galg{Y_1 \cup \dots \cup Y_{i-1}}$ such that
 $u(t_i) = q_i$ and $u(\hat{t}_i) = \hat{q}_i$.
 Define $r : \mathbb{Q}[Y] \rightarrow \mathbb{Q}[Y]$ to be the polynomial homomorphism that sends
 $y_i \mapsto t_i + \hat{t}_i$.
 Observe that $r$ is a solvability witness for $\invimage{u}{T}$: $r$ is solvable by construction,
 and for each $y_i$ we have 
 \[ \dvext{u}(y_i' - r(y_i)) = \dvext{u}(y_i' - t_i - \hat{t}_i) = \dvext{u}(y_i') - \dvext{u}(t_i) - \dvext{u}(\hat{t}_i) = p_i' - q_i - \hat{q}_i \in T
 \]
 and so $y_i' - r(y_i) \in \dvext{u}^{-1}[T]$.
 
 Next, we must show that $R(T)$ is universal.  Suppose that
 $V \subseteq \mathbb{Q}[Z,Z']$ is a solvable transition ideal, and that $v : \mathbb{Q}[Z] \rightarrow \mathbb{Q}[X]$ is a linear simulation from $T$ to $V$.  We must show that there is a homomorphism $\hat{v} : \mathbb{Q}[Z] \rightarrow \mathbb{Q}[Y]$ such that $u \circ \hat{v} = v$ and $\hat{v}$ is a simulation from $\invimage{u}{T}$ to $V$.

 It is sufficient to show that for each $z \in Z$, we have
 $v(z) \in \strata{T}$: under this assumption, for each $z$, $v(z)$ can be written (uniquely) as a linear combination
$v(z) = a_1p_1 + \dotsi + a_np_n = u(a_1y_1 + \dots + a_ny_n)$, and so we can define $\hat{v}$ by $\hat{v}(z) \defeq a_1y_1 + \dots + a_ny_n$.
It follows that
\[\dvext{\hat{v}}^{-1}[\dvext{u}^{-1}[T]] = \dvext{(u \circ \hat{v})}^{-1}[T] = \dvext{v}^{-1}[T] \supseteq V\ ,\] and so $\hat{v}$ is a simulation from $\dvext{u}^{-1}[T]$ to $V$.

 Since $U$ is solvable, there is a partition $Z_1,\dots,Z_k$ of $Z$ and a polynomial homomorphism $g : \mathbb{Q}[Z] \rightarrow \mathbb{Q}[Z]$ such that for each $i$ and each
 $z \in Z_i$, $g(z)$ can be written as the sum of a linear term in $Z_i$ and a polynomial in $Z_1,...,Z_{i-1}$.   For each $i \in \set{0,\dots,n}$, let $Z_{\leq i}$ denote $\bigcup_{j = 1}^i Z_j$.  We show by induction on $k$ that for all $z \in Z_{\leq k}$, $v(z) \in \stratum{k}{T}$.  The base case $k=0$ is trivial: $Z_{\leq 0}$ is empty.  For the inductive step, suppose 
 $s(Z_{\leq k}) \subseteq \stratum{k}{T}$, and prove
 $s(Z_{\leq {k+1}}) \subseteq \stratum{k+1}{T}$.
  Let $z \in Z_{\leq_{k+1}}$.  Since $g$ is a witness to solvability of $V$, we have $z' - g(z) \in V$, and since
  $v$ is a simulation from $T$ to $V$, we have
  $\dvext{v}(z' - g(z)) \in T$.  Since $g$ is solvable, we have
 $g(z) \in \gspace{Z_{k+1}} + \galg{Z_{\leq k}}$, and so
 $v(g(z)) \in \gspace{v(Z_{k+1})} + \galg{v(Z_{\leq k})}$.
 Since $\dvext{v}(z' - g(z)) = v(z)' - v(g(z)) \in T$ and $v(g(z)) \in \gspace{v(Z_{k+1})} + \galg{v(Z_{\leq k})}$, we have
 $v(z) \in \Det{T}{\gspace{v(Z_{k+1})} + \galg{v(Z_{\leq k})}}$, and so by the induction hypothesis and monotonicity of $\textsf{Det}$ we have $v(z) \in \Det{T}{\gspace{v(Z_{k+1})} + \stratum{k}{T}}$.  Since this holds for all $z \in Z_{k+1}$, we have
 $\gspace{v(Z_{k+1}} \subseteq \Det{T}{\gspace{v(Z_{k+1})} + \stratum{k}{T}}$, and since $\stratum{k+1}{T}$ is defined to be greatest fixedpoint of
 $S \mapsto \Det{T}{S + \stratum{k}{T}}$, we have 
 $\gspace{v(Z_{k+1})} \subseteq \stratum{k+1}{T}$, and so
 for all $z \in Z_{k+1}$, $v(z) \in \stratum{k+1}{T}$.
\end{proof}

The proof of \cref{thm:solvable-reflection-exist} is constructive and gives rise to a procedure for computing solvable reflections of transition ideals, depicted in \cref{alg:solvable-reflection}.   The procedure relies on a subroutine for computing $\textsf{Det}$, which is given in
\cref{alg:det} (the correctness of which is \cref{lem:det}).   Otherwise, the procedure follows the steps of the proof directly. 

\begin{algorithm}[t]
\KwIn{Finite sets of polynomials $T \subseteq \mathbb{Q}[X,X']$,
$V \subseteq \mathbb{Q}[X]$, and $Q \subseteq \mathbb{Q}[X]$}
\KwOut{Set of polynomials $D$ such that $\gspace{D} = \Det{\gideal{T}}{\gspace{V}+\galg{Q}}$}
$Y \gets \set{ y_v : v \in V }$ \tcc*{Introduce a variable $y_v$ for each generator $v \in V$}
$Z \gets \set{ z_q : q \in Q }$ \tcc*{Introduce a variable $z_q$ for each generator $q \in Q$}
Let $f : \mathbb{Q}[X,Y,Z] \rightarrow \mathbb{Q}[X,X']$ be the map that sends $x \mapsto x'$, $y_v \mapsto v$, $z_q \mapsto q$\;
$F \gets \textit{inv.image}(f,T)$\;
$G \gets $  Gr\"{o}bner basis for $F$ w.r.t.  $\elimorder{X \cup Y}$ for some graded order $\ll$\;
$D \gets \emptyset$\;
\ForEach{$g \in G$}{
    \If{$g = p + q + r$ for some non-zero $p \in \gspace{X}$, $q \in \gspace{Y}$, and $r \in \mathbb{Q}[Z]$}{
        $D \gets D \cup \set{ p }$\;
    }
}
\Return{$D$}
\caption{Computation of determined functionals \label{alg:det}}
\end{algorithm}

\begin{lemmarep} \label{lem:det}
Let $T \subseteq \mathbb{Q}[X,X']$,
$V \subseteq \mathbb{Q}[X]$, and $Q \subseteq \mathbb{Q}[X]$ (for some set of variables $X$).  Then \cref{alg:det} computes a set of polynomials $D$ such that $\gspace{D} = \Det{\gideal{T}}{\gspace{V}+\galg{Q}}$.
\end{lemmarep}
\begin{appendixproof}
Let $Y, Z, f, F, G$, and $D$ be as in \cref{alg:det}.

First we prove $D \subseteq \Det{\gideal{T}}{\gspace{V}+\galg{Q}}$. 
Let $p \in D$. 
Then there is some $q \in \gspace{Y}$ and $r \in \mathbb{Q}[Z]$ such that such that $p - q - r \in G$.  It follows that $f(p - q - r) = p' - f(q) - f(r) \in \gideal{T}$.
Finally, observe that $f(q) \in f(\gspace{Y}) = \gspace{V}$, and that  $f(r) \in f(\mathbb{Q}[Z]) = \galg{Q}$, and thus $p \in \Det{\gideal{T}}{\gspace{V}+\galg{Q}}$.

Next we prove $\Det{\gideal{T}}{\gspace{V}+\galg{Q}} \subseteq \gspace{D}$.  Let $p \in \Det{\gideal{T}}{\gspace{V}+\galg{Q}}$, and suppose that $p$ is non-zero.  Then there is some $v \in \gspace{V}$ and some $q \in \galg{Q}$ such that $p' - v - q \in \gideal{T}$.
Since $v \in \gspace{V} = f(\gspace{Y})$, there is some $\hat{v} \in \gspace{Y}$ such that $f(\hat{v}) = v$.  Since $q \in \galg{Q} = f(\mathbb{Q}[Z])$, there is some polynomial $\hat{q} \in \mathbb{Q}[Z]$ such that $f(\hat{q}) = q$. 
Since $\gideal{T}$ contains $p' - v - q$, $\gideal{F} = \gideal{\textit{inv.image}(f,T)} = f^{-1}[T]$ must contain $p - \hat{v} - \hat{q}$.

Let $G = \set{ g_1, \dots, g_n}$ be a Gr\"{o}bner basis for $\gideal{F}$ with respect to $\elimorder{X \cup Y}$, and let
$I$ be the subset of $\set{1, \dots, n}$ such that $g_i = p_i + q_i + r_i$ for some non-zero linear $p_i \in \mathbb{Q}[X]$,
linear $q_i \in \mathbb{Q}[Y]$, and $r_i \in \mathbb{Q}[Z]$,
so that $D = \set{ p_i : i \in I }$.
Since $G$ is a  Gr\"{o}bner basis for $\gideal{F}$ with respect to $\elimorder{X \cup Y}$ and $p - \hat{v} - \hat{q} \in \gideal{F}$,
we have 
$p - \hat{v} - \hat{q} = \sum_{i=0}^n c_ig_i$, for some $c_1,\dots,c_n \in \mathbb{Q}[X,Y,Z]$ with
$\LM(c_ig_i) \elimorder{X\cup Y} \LM(p - \hat{v} - \hat{q})$ for each $i$.   Since $\elimorder{X \cup Y}$ is using a graded order and $p - \hat{v} - \hat{q}$ is linear in $X$ and $Y$, so must be each $c_ig_i$, and so $c_i \in \mathbb{Q}$ for each $i \in I$.
It follows that $p \in \gspace{ \set{ p_i : i \in I } } = \gspace{D}$.
\end{appendixproof}

\begin{algorithm}[t]
\KwIn{Finite set of polynomials $T \subseteq \mathbb{Q}[X,X']$}
\KwOut{Solvable reflection of $T$}
$i \gets 0$\;
$n \gets 0$\;
$S_0 \gets \emptyset$\;
\Repeat{$n = n'$}{
  $n' = n$\;
  $S_{i+1} \gets X$\;
  \tcc{$\gspace{S_{i+1}}$ is the greatest fixpoint of $X \mapsto \Det{T}{\gspace{X} + \galg{S_i}}$}
  \Repeat{$\textit{dim} = |S_{i+1}|$}{
    $\textit{dim} \gets |S_{i+1}|$\;
    $S_{i+1} \gets \Det{T}{\gspace{S_{i+1}}+\galg{S_i}}$ \tcc*{\cref{alg:det}}
  }
  \ForEach{$p \in S_{i+1}$}{
    \If{$p \notin \gspace{q_0, \dots, q_{n-1}}$}{
        $q_n \gets p$\;
        $n \gets n + 1$\;
    }
  }
}
Let $u$ be the map $\mathbb{Q}[y_0,\dots,y_{n-1}] \rightarrow \mathbb{Q}[X]$ mapping $y_i \mapsto q_i$\;
\Return{$\tuple{u, \textit{inv.image}(\dvext{u},T)}$}
\caption{Solvable reflection \label{alg:solvable-reflection}}
\end{algorithm}

\subsection{Ultimately Solvable Reflections} \label{sec:ultimately-solvable}

The algorithm presented in \cref{sec:closure} reveals that a weaker condition than solvability is sufficient in order to compute the Kleene closure of a transition ideal, namely the transition needs to be \textit{ultimately solvable}.  This raises the question of whether it's possible to compute ultimately solvable reflections of arbitrary transition ideals, and thereby obtain a more powerful algorithm for generating polynomial invariants for loops.
In this section, we answer that question in the affirmative.

 Let $T$ be a transition ideal.  Define a sequence
 $\tuple{t_0,T_0},\tuple{t_1,T_1},\dots$ where $t_0$ is the identity function, $T_0 = T$, 
 and for each $i \geq 0$,
 $t_{i+1}$ is the simulation component of the solvable reflection of $T_i+\dom^*(T_i)$, and $T_{i+1}$ is $\invimage{t_{i+1}}{T_i}$.
 Since for all $i$, if $t_i$ is \text{not} invertible then the dimension of $T_i$ is strictly smaller than $T_{i-1}$, there must be some first index $n$ such that
 $t_n$ is invertible.  Define
 $R^*(T) \defeq \tuple{u^*,\dvext{u^*}^{-1}[T]}$, where
 $u^* = t_{1} \circ \dots \circ t_{n-1}$.
\begin{lemmarep}
Let $T$ be a transition ideal.  Then $R^*(T)$ is an ultimately solvable reflection of $T$.
\end{lemmarep}
\begin{appendixproof}
    Observe that $\dvext{u^*}^{-1}[T]$ is ultimately solvable, since 
    (by the definition of $u^*$), the solvable reflection of
    $\dvext{u^*}^{-1}[T] + \dom^*(\dvext{u^*}^{-1}[T])$ is isomorphic
    to $\dvext{u^*}^{-1}[T] + \dom^*(\dvext{u^*}^{-1}[T])$.
    
    Towards universality, we show that for all $i$, we have
    \begin{enumerate}
        \item  $t_0 \circ \dots \circ t_i$ is a simulation $T \rightarrow T_i$
        \item For any ultimately solvable $V$ and simulation $v : T \rightarrow V$, there is a unique simulation $v_i : T_i \rightarrow V$ such that $v = t_0 \circ \dots \circ t_i \circ v_i$.
    \end{enumerate}
    by induction on $i$.  The base case $i=0$ is trivial.
    For the induction step, suppose that (1) and (2) hold for $i$.
    By definition, $t_{i+1}$ is the simulation component of a solvable reflection of $R(T_i + \dom^*(T_i))$, and $T_{i+1}$ is $\dvext{t_{i+1}}^{-1}[T_i]$.  Thus,
    $t_{i+1}$ is a simulation from $T_i$ to $T_{i+1}$.  Since
    $t_0 \circ \dots \circ t_i$ is a simulation $T \rightarrow T_{i}$, it follows that (1) the composition $t_0 \circ \dots \circ t_i \circ t_{i+1}$ is a simulation from $T$ to $T_{i+1}$.
    Next, suppose that $V$ is ultimately solvable and $v : T \rightarrow V$ is a simulation.
    By the induction hypothesis,
    there is a unique simulation $v_i : T_i \rightarrow V$ such that $v = t_0 \circ \dots \circ  t_i \circ v_i$.  
    It follows that $v_i$ is also a simulation
    from $T_i + \dom^*(T_i)$ to $V + \dom^*(V)$.
    Since $V$ is ultimately solvable,
    $V + \dom^*(V)$ is solvable.  
    Since there is some $W_{i+1}$ such that
    $\tuple{t_{i+1},W_{i+1}}$ is a solvable reflection of $T_i+\dom^*(T_i)$, 
    $V + \dom^*(V)$ is solvable, and $v_i : T_i + \dom^*(T_i) \rightarrow V + \dom^*(V)$ is a simulation,
    there is a unique simulation $v_{i+1}$ from $W_{i+1}$ to $V + \dom^*(V)$ such that $t_{i+1} \circ v_{i+1} = v_i$. We have
    \begin{align*}
        V &\subseteq \invimage{v_i}{T_i} & \text{$v_i$ a simulation $T_i \rightarrow V$}\\
        &= \invimage{t_{i+1} \circ v_{i+1}}{T_i} & t_{i+1} \circ v_{i+1} = v_i\\
        &= \invimage{v_{i+1}}{\invimage{t_{i+1}}{T_i}} \\
        &= \invimage{v_{i+1}}{T_{i+1}} & \text{Definition of } T_{i+1}
    \end{align*}
    and thus $v_{i+1}$ is a simulation from $T_{i+1}$ to $V$.
    For uniqueness, suppose $v_{i+1}'$ is a simulation from $T_{i+1}$ to $V$ with $v = t_0 \circ \dots \circ t_{i+1} \circ v_{i+1}'$.  Since
    $t_{i+1} \circ v_{i+1}'$ is a simulation $T_i \rightarrow V$ with
    $t_0 \circ \dots \circ t_i \circ (t_{i+1} \circ v_{i+1}') = v$, and
    $v_i$ is the unique such simulation, we have
    $v_i = t_{i+1} \circ v_{i+1}'$.  Since $v_{i+1}$ is unique such that
    $v_i = t_{i+1} \circ v_{i+1}$, we have $v_{i+1} = v_{i+1}'$.

    Finally we show that $\tuple{u^*,\dvext{u^*}^{-1}[T]}$ is universal.
    Let $V$ be ultimately solvable, and let $v : T \rightarrow V$ be a simulation.  By (2), there is a unique simulation $v_{n-1} : T_{n-1} \rightarrow V$ such that $v = t_0 \circ \dots \circ t_{n-1} \circ v_i = u^* \circ v_i$.  Since $T_{n-1} = \dvext{u^*}^{-1}[T]$, we have the result.
\end{appendixproof}

\subsection{Polynomial Simulations}
\label{sec:polynomial-simulations}

Here, we consider a generalization of our definition of (ultimately) solvable reflections, in which the simulation from a transition ideal to its reflection is a polynomial map rather than a linear map.

Let $X$ be a set of variables and let $d \in \mathbb{N}$ be a fixed degree bound.  Let $X^{\leq d}$ be the set of monomials of degree at most $d$ (excluding 1), let $Y$ be a set of variables of cardinality equal to that of $X^{\leq d}$, and let $f_{X,d} : Y \rightarrow X^{\leq d}$ be a bijection.  Observe that if $Z$ is a set of variables and $g : \mathbb{Q}[Z] \rightarrow \mathbb{Q}[X]$ is a polynomial homomorphism of degree at most $d$, then there is unique \textit{linear} polynomial homomorphism $\hat{g}$ such that $g = \hat{g} \circ f_{X,d}$.  As a result, we can reduce the problem of computing reflections with respect to bounded-degree polynomial simulations to the problem of computing reflections with respect to linear simulations:

\begin{lemmarep}
    Let $T \subseteq \mathbb{Q}[X,X']$ be a transition ideal, and
     let $d \in \mathbb{N}$ be a fixed degree bound.
    Suppose that $\tuple{u,U}$ is an (ultimately) solvable reflection of $\invimage{f_{X,d}}{T}$.  Then
    $\tuple{u \circ f_{X,d},U}$ is an (ultimately) solvable reflection of $T$ with respect to degree-$d$ simulations, in the sense that (1) $u \circ f_{X,d}$ has degree at most $d$, (2) $U$ is solvable,
    and (3) for solvable transition ideal $V$ and simulation $v$ from $T$ to $V$ of degree at most $d$, there is a unique linear simulation $\hat{v}$ such that $v = \hat{v} \circ u \circ f_{X,d}$.
\end{lemmarep}
\begin{appendixproof}
    Since $u$ is linear, the polynomial homomorphism $u \circ f_{X,d}$ is degree-$d$, and so $\tuple{u \circ f_{X,d},U}$ is a degree-$d$ solvable abstraction of $T$.  It remains only to show that is satisfies the desired universal property.

    Suppose $V \subseteq \mathbb{Q}[Z,Z']$ is an (ultimately) solvable transition ideal, and that $v : T \rightarrow V$ is a degree-$d$ simulation.  Then there  exists a unique linear simulation $\hat{v} : \invimage{f_{X,d}}{T} \rightarrow V$ such that $v = \hat{v} \circ f_{X,d}$.  Since  $U$ is an (ultimately) solvable reflection of $\invimage{f_{X,d}}{T}$,
    $\hat{v}$ is a linear simulation, and $V$ is (ultimately) solvable, there is a unique linear simulation $w : U \rightarrow V$ such that $w \circ u = \hat{v}$.  Finally, observe that $w \circ (u \circ f_{X,d}) = (w \circ u) \circ f_{X,d} = \hat{v} \circ f_{X,d} = v$.
\end{appendixproof}

\section{Kleene closure of solvable transition ideals} \label{sec:closure}
In this section we describe how to compute $T^* = \bigcap_{i=0}^\infty T^i$ when $T$ is either a solvable and ultimately solvable transition ideal. In doing so we introduce a sub-problem of potential independent interest. The sub-problem asks how to find the set of rational polynomials that evaluate to 0 for every position in a $\mathbb{Q}$-algebra sequence defined by a solvable polynomial map. 

\subsection{Finding the Relations of a Solvable Map over a $\mathbb{Q}$-algebra}
\begin{problem}\label{prob:solvePolyMap}
    Let $A$ be a $\mathbb{Q}$-algebra, let $X$ be a finite set of variables, and let $v \in A^X$.  \textit{Given a solvable map $f : \mathbb{Q}[X] \rightarrow \mathbb{Q}[X]$ and basis $I$ such that  $\gideal{I} = \I{A}{\set{v}}$, find a basis for $\I{A}{\set{f^i_A(v):i \in \mathbb{N}}} \subseteq \mathbb{Q}[X]$.}
\end{problem}
Intuitively, the solvable map $f$ in \cref{prob:solvePolyMap} defines a  sequence, $(v, f^1_A(v), f^2_A(v), \dots)$. The goal of the problem is to find a basis for the set of polynomials $p\in \mathbb{Q}[X]$ such that \[(p^A(v), p^A(f^1_A(v)), p^A(f^2_A(v)), \dots) = (0, 0, 0, \dots).\]
The purpose of the ideal $\gideal{I} =\I{A}{\set{v}}$ is to give the set of polynomial relations of the first element of the sequence, $v$. In the case of \cref{prob:solvePolyMap} we take $\I{A}{\set{v}}$ as a given to encode the relevant information of $A$ and $v$.
\begin{example}\label{ex:solveSolvable}
    Let $X = \{x, y\}$, $A = \mathbb{Q}[w]/\langle w^2-3\rangle$, and $v = \{x\mapsto w + \gideal{w^2-3} , y \mapsto 2w + 3 + \gideal{w^2-3}\}$. For this case we have $\I{A}{\set{v}} = \langle 2x - y + 3, x^2-3\rangle$. Let $f : \mathbb{Q}[x, y] \rightarrow \mathbb{Q}[x, y]$ be the polynomial homomorphism defined by $f(x) = 2y$ and $f(y) = 2x$. Then $f$ is a solvable map that defines the following sequence over $A^X$:
    \[
    \left(\begin{Bmatrix}
        x\mapsto w + \gideal{w^2-3}\\
        y \mapsto 2w + 3 + \gideal{w^2-3}
    \end{Bmatrix}, 
    \begin{Bmatrix}
        x\mapsto 4w + 6+ \gideal{w^2-3}\\
        y \mapsto 2w+ \gideal{w^2-3}
    \end{Bmatrix}, 
    \begin{Bmatrix}
        x\mapsto 4w+ \gideal{w^2-3}\\
        y \mapsto 8w + 12+ \gideal{w^2-3}
    \end{Bmatrix},
    \dots
    \right)
    \]
    It can readily be verified that $p(x, y) = x^2 - 4xy + y^2\in\I{A}{\set{f^i_A(v):i \in \mathbb{N}}}$. For instance, let $v_2 =\set{x \mapsto 4w+ \gideal{w^2-3}, y\mapsto 8w+12 + \gideal{w^2-3}}$ $\in A^X$ denote the second (indexing from 0) valuation of the sequence. Then $p^A(v_2) = (4w+ \gideal{w^2-3})^2 - 4(4w + \gideal{w^2-3})(8w+12+ \gideal{w^2-3}) + (8w+12+ \gideal{w^2-3})^2 = 144 - 48w^2 + \gideal{w^2-3} = 144 - 48(3) + \gideal{w^2-3}= 0+ \gideal{w^2-3}$.
\end{example}

We can also view \cref{prob:solvePolyMap} as defining $|X|$ c-finite sequences of $A$ elements, where each sequence is the trajectory of a particular variable. If we take the special case where $A = \mathbb{Q}$ then the goal is to find the \emph{algebraic relations} \cite{KAUERS2008} of the $|X|$ sequences. 
\begin{definition}\label{def:algRelation}(Modification of \citet{KAUERS2008})
    Let $k$ be a field and $K$ a (commutative) $k$-algebra. An \textbf{algebraic relation over $k$} among $a_1, \dots, a_m \in K$ is an element of the kernel of the $k$-algebra homomorphism $\varphi:k[x_1, \dots, x_m]\rightarrow K$ that maps $x_j$ to $a_j$.
\end{definition}
\citet[Algorithm 2]{KAUERS2008} presents a method to find the set of algebraic relations over $\mathbb{Q}$ for the case of a set of c-finite sequences over the $\mathbb{Q}$-algebra, $\mathbb{Q}^\omega$; consequently, solving \cref{prob:solvePolyMap} for the case where $A = \mathbb{Q}$. In this section we show how the method of \citet{KAUERS2008} can be utilized to solve \cref{prob:solvePolyMap} for the case of an arbitrary $\mathbb{Q}$-algebra $A$. First we briefly review the method of \citet{KAUERS2008}.

At a high-level, the method of \citet{KAUERS2008} is, given c-finite sequences $\seq{a_1(n)}{n}, \dots, \seq{a_k(n)}{n} \in \mathbb{Q}^{\omega}$, perform the following:
\begin{enumerate}
    \item Compute closed-form solutions of each sequence as $a_i(n) = \sum_{j=1}^m p_{ij}(n) \Theta_j^n$ for polynomials $p_{ij} \in \mathbb{\bar{Q}}[n]$ and values $\Theta_1, \dots, \Theta_m\in \mathbb{\bar{Q}}$.
    \item Using the algorithm of \citet{Thesis:Ge} compute a basis $J\subseteq \mathbb{Q}[y_0, y_1, \dots, y_m]$ for the ideal of algebraic relations over $\mathbb{Q}$ of the sequences $\seq{n}{n}, \seq{\Theta_1^n}{n}, \dots, \seq{\Theta_m^n}{n} \in \mathbb{\bar{Q}}^\omega$. That is, $p(y_0, y_1, \dots, y_m) \in \gideal{J}$ if and only if $p(n, \Theta_1^n, \dots, \Theta_m^n) = 0$ for $n\in \mathbb{N}$.
    \item Let $B = \set{x_i - \sum_{j=1}^m p_{ij}(y_0)y_j : 1\le i \le k}$. Using Gr\"obner basis elimination techniques compute the ideal $\gideal{H} = (\gideal{B\cup J})\cap \mathbb{Q}[x_1, \dots, x_k]$.
\end{enumerate}
The resulting ideal, $\gideal{H}$, is the ideal of algebraic relations over $\mathbb{Q}$ of the sequences $\seq{a_1(n)}{n}$, $\dots$, $\seq{a_k(n)}{n}\in \mathbb{Q}^\omega$. That is, $\gideal{H}$ has the property that $p\in \gideal{H}$ if and only if $p\in \mathbb{Q}[x_1,\dots, x_k]$ and $p(a_1(n), \dots, a_k(n)) = 0$ for all $n\in \mathbb{N}$.
\begin{remark}
    It should be noted that the method presented above as well as in \citet{KAUERS2008}
only works when the characteristic polynomials of the input recurrences do not have 0 as a root. \citet{KAUERS2008} notes this, and correctly states such a situation can be handled with a pre-processing step. In this paper, we are more explicit on how to handle 0 roots.
\end{remark}

\begin{algorithm}[t]
\caption{Solve solvable map}\label{alg:solveMap}
	\SetAlgoLined
	\KwIn{$\mathbb{Q}$-algebra $A$, valuation $v\in A^X$, solvable map $f:\mathbb{Q}[X]\rightarrow \mathbb{Q}[X]$, and basis $I$ for the ideal $\I{A}{\set{v}}$.}
	\KwOut{A basis for the ideal $\I{A}{\set{f^n_A(v):i\in \mathbb{N}}}$}
    $m\gets |X|$\;
	For each $x_j \in X$ solve the trajectory of $x_j$ as $f_A^n(x_j) = 
 \sum_{i=1}^m \left( z_{ij}(n) + \sum_{k = 1}^m p_{ijk}(n) \Theta_k^n\right)v(x_i)$\;\label{Li:sequences}
    For $1\le i\le m$ and $1 \le j \le m$ let $a_{ij}(n) = \sum_{k=1}^mp_{ijk}(n)\Theta_k^n \in \mathbb{Q}^\omega$\;
    Using \citet[Algorithm 2]{KAUERS2008} let $J'\subseteq\mathbb{Q}[y_{11}, \dots, y_{1m}, \dots, y_{m1}, \dots, y_{mm}]$ be the ideal of algebraic relations of $\seq{a_{11}(n+m)}{n}, \dots, \seq{a_{1m}(n+m)}{n}, \dots, \seq{a_{m1}(n+m)}{n}, \dots, \seq{a_{mm}(n+m)}{n}$\;
    $J\gets$ Basis for the ideal $J' \cap \bigcap_{n=0}^{m-1} \left\langle \set{y_{ij} - (z_{ij}(n) + \sum_{k = 1}^m p_{ijk}(n) \Theta_k^n) : 1\le i\le m, 1\le j\le m}\right\rangle$\;
    \Return{$\textit{inv.image}(f,I\cup J)$, where $f$ is the map that sends $x_j \mapsto x_1y_{1j} + \dots + x_my_{mj}$}
\end{algorithm}
The main observation that leads to our method for a general $\mathbb{Q}$-algebra $A$ is that the only ``new'' algebraic relations over $\mathbb{Q}$ for the $A$ sequences must come from the relations of the initial values of the sequence (the ideal $\gideal{I} = \I{A}{\set{v}}$ in the statement of  \cref{prob:solvePolyMap}). This observation leads to our method for solving \cref{prob:solvePolyMap}, which we present as \cref{alg:solveMap}. 

\Cref{alg:solveMap} begins by writing the trajectory of each variable $x_j$ as the closed-form solution $\sum_{i=1}^m \left( z_{ij}(n) + \sum_{k = 1}^m p_{ijk}(n) \Theta_k^n\right)v(x_i)$, which is a sum-of-products of the form $a'_{1j}(n)v(x_1) + \dots + a'_{mj}(n)v(x_m)$ with each $\seq{a'_{ij}(n)}{n}\in \mathbb{Q}^\omega$. Moreover, each $a'_{ij}$ is a rational c-finite sequence. However, in general, the sequences $\seq{a'_{ij}(n)}{n}$ might have 0 as a root of their characteristic polynomials---hence the presence of the term $z_{ij}(n)$. These $z_{ij}(n)$ terms mean that the method of \cite{KAUERS2008} cannot be directly applied. The following lemma shows how we can handle this general case.
\begin{lemmarep}\label{lem:shiftRels}
    Let $\seq{a'_1(n)}{n}, \dots, \seq{a'_m(n)}{n}\in \mathbb{Q}^\omega$ such that $a'_i(n) = z_i(n) + a_i(n)$ for $\seq{z_i(n)}{n}$, $\seq{a_i(n)}{n}\in \mathbb{Q}^\omega$. Furthermore, suppose that there exists some $d\in \mathbb{N}$ such that for all $1\le i \le m$, $z_i(n) = 0$ for $n\ge d$. Let $J'\subseteq\mathbb{Q}[y_1,\dots, y_m]$ be the ideal of algebraic relations over $\mathbb{Q}$ of $\seq{a_1(n+d)}{n}$, $\dots$, $\seq{a_m(n+d)}{n}$. Then $J = J' \cap \bigcap_{n=0}^{d-1}\gideal{\set{y_{i} - (z_{i}(n) +a_i(n)): 1\le i\le m}}$ is the ideal of algebraic relations of $ \seq{a'_1(n)}{n}, \dots, \seq{a'_m(n)}{n}$.
\end{lemmarep}
\begin{appendixproof}
We need to show $p\in J$ if and only if $p(a'_1(n), \dots, a'_m(n)) = 0$ for $n\in \mathbb{N}$.
    Let $J'' = \bigcap_{n=0}^{d-1}\gideal{\set{y_{i} - (z_{i}(n) +a_i(n)): 1\le i\le m}}$.
    
    ($\implies$) Let $p\in J$. Then $p\in J'$ and $p\in J''$. Because $p\in J''$ it must be the case that $p = \sum_{i=1}^m g_{in}(y_i - (z_{i}(n) +a_i(n)))$ for $0\le n < d-1$ and some polynomials $g_{in}\in \mathbb{Q}[Y]$. Therefore, $p(a'_1(n), \dots, a'_m(n)) =\sum_{i=1}^m g_{in}(a'_i(n) - (z_{i}(n) +a_i(n))) = 0$ for $0\le n < d$. For $n\ge d$ we have $a'_i(n) = a_i(n)$. Equivalently $a'_i(n+d) = a_i(n+d)$ for $n\ge 0$. Because $p\in J'$, $p(a_1(n+d), \dots, a_m(n+d)) = 0$ for any $n\in \mathbb{N}$. Therefore, $p(a'_1(n+d), \dots, a'_m(n+d)) = 0$ for $n\in \mathbb{N}$ and $p(a'_1(n), \dots, a'_m(n)) = 0$ for $n\ge d$. Thus, $p(a'_1(n), \dots, a'_m(n)) = 0$ for $n\in \mathbb{N}$.

    ($\impliedby$) Suppose $p(a'_1(n), \dots, a'_m(n)) = 0$ for $n\in \mathbb{N}$. Then $p(a'_1(n), \dots, a'_m(n)) = 0$ for $0\le n < d$ and $n\ge d$. Thus, $p(a'_1(n+d), \dots, a'_m(n+d)) = p(a_1(n+d), \dots, a_m(n+d)) = 0$ for $n\in\mathbb{N}$, so $p \in J'$. For $0\le n < d$, $a'_i(n) = z_i(n) + a_i(n)$, so $p(a'_1(n), \dots, a'_m(n)) = p(z_1(n) + a_1(n), \dots, z_m(n) + a_m(n)) = 0$. Let $p(y_1, \dots, y_m)$ be reduced relative to a Gr\"obner basis for 
    \[\gideal{\set{y_{i} - (z_{i}(n) +a_i(n)): 1\le i\le m}}\]
    for some $0\le n < d$. Then $p = \sum_{i=1}^mg_{in}(y_i - (z_i(n) + a_i(n))) + r$ for some $g_{in}$'s $\in \mathbb{Q}[Y]$ and $r\in \mathbb{Q}$. Thus, we have 
    \[0 = p(z_1(n) + a_1(n), \dots, z_m(n) + a_m(n)) = \sum_{i=1}^mg_{in}(z_i(n) + a_i(n) - (z_i(n) + a_i(n))) + r = r.\]
    Thus, $p\in \gideal{\set{y_{i} - (z_{i}(n) +a_i(n)): 1\le i\le m}}$ for $0\le n < d$. Therefore, $p\in J''$.
\end{appendixproof}

Because each $\seq{a'_{ij}(n)}{n}$ is c-finite, each $z_{ij}(n)$ has the property that $z_{ij}(n) = 0$ for $n\ge d$, where $d$ is the maximal order of the sequences $\seq{a'_{ij}(n)}{n}$. In the case of \cref{alg:solveMap} we have $d \le m$. Thus, each $z_{ij}(n) = 0$ for $n\ge m$.
\Cref{lem:shiftRels} shows that $J$ in \cref{alg:solveMap} is a basis for the ideal of algebraic relations over $\mathbb{Q}$ of the $\seq{a'_{ij}(n)}{n}$ sequences.

We use the next two lemmas to establish the needed property of the return value of \cref{alg:solveMap} and thus show that \cref{alg:solveMap} is correct. The desired result of \cref{alg:solveMap} is the ideal of algebraic relations over $\mathbb{Q}$ of the $A$ sequences, $f^n_A(x_j)$ defined in Line~\ref{Li:sequences} of \cref{alg:solveMap}. These sequences are defined as a sum-of-products of rational sequences and constant sequences of the valuations $v(x_i)$. The next lemma (\cref{lem:algebraic-relations}) shows that if we want to find the algebraic relations over $\mathbb{Q}$ among arbitrary rational sequences lifted to $A^\omega$ and constant valuations $\set{\seq{v(x_j)}{i}: x_j\in X}$, it is sufficient to consider the ideals of algebraic relations over $\mathbb{Q}$ among $\set{\seq{v(x_j)}{i}: x_j\in X}$ and the lifted rational sequences separately. This is what makes \cref{alg:solveMap} possible: we are given the algebraic relations over $\mathbb{Q}$ among $\set{\seq{v(x_j)}{i}: x_j\in X}$ as input, and we can calculate the algebraic relations over $\mathbb{Q}$ among the c-finite sequences defined in Line~\ref{Li:sequences} using the algorithm of \citet{KAUERS2008}. The second lemma (\cref{lem:solvableRelations}) then applies \cref{lem:algebraic-relations} to the specific form of the $\seq{f^n_A(x_j)}{n}$ sequences defined in \cref{alg:solveMap} to establish the correctness of the algorithm.
\begin{lemma} \label{lem:algebraic-relations}
  Let $\seq{a_{1}(i)}{i}, \dots,\seq{a_{m}(i)}{i} \in \mathbb{Q}^\omega$ and let $J\subseteq \mathbb{Q}[y_{1}, \dots, y_{m}]$ be the ideal of algebraic relations over $\mathbb{Q}$ among $\seq{a_{1}(i)}{i}, \dots,\seq{a_{m}(i)}{i}$. Let $A$ be a $\mathbb{Q}$-algebra with additive unit $0_A$ and multiplicative unit $1_A$. Let $X = \set{x_1,\dots, x_n}$. Let $v\in A^X$ and let $I = \I{A}{\set{v}}\subseteq \mathbb{Q}[X]$.
  Let $\varphi : \mathbb{Q}[X, Y] \rightarrow A^\omega$ be the $\mathbb{Q}$-algebra homomorphism defined by
    \begin{align*}
        \varphi(x_j) = \seq{v(x_j)}{i} \qquad
        \varphi(y_j) = \seq{a_j(i)(1_A)}{i}
    \end{align*}
  Then $\varphi(p) = \seq{0_A}{i}$ if and only if $p\in \gideal{I\cup J}\subseteq \mathbb{Q}[X, Y]$.
\end{lemma}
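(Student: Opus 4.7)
The plan is to prove the biconditional by establishing each direction separately.

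The reverse direction $(\Leftarrow)$ is routine. Since $\varphi$ is a $\mathbb{Q}$-algebra homomorphism, it suffices to check that $\varphi$ sends every generator of $I$ and every generator of $J$ to the zero sequence $\seq{0_A}{i}$. For $g\in I$, the $i$-th entry of $\varphi(g)$ is $g^A(v)$, which is $0_A$ because $I=\I{A}{\set{v}}$. For $h\in J\subseteq\mathbb{Q}[Y]$, the $i$-th entry of $\varphi(h)$ is $h(a_1(i),\ldots,a_m(i))\cdot 1_A = 0\cdot 1_A = 0_A$, since $J$ is the ideal of algebraic relations among $\seq{a_j(i)}{i}$.

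For the forward direction $(\Rightarrow)$, I would use a Gr\"obner basis / normal form argument. Fix monomial orders on $\mathbb{Q}[X]$ and $\mathbb{Q}[Y]$, and let $G_I\subseteq\mathbb{Q}[X]$ and $G_J\subseteq\mathbb{Q}[Y]$ be Gr\"obner bases for $I$ and $J$. Because the leading monomials of $G_I$ and $G_J$ involve disjoint variables, they are pairwise coprime, and Buchberger's coprime-leading-terms criterion shows that $G_I\cup G_J$ is a Gr\"obner basis for $\gideal{I\cup J}$ in $\mathbb{Q}[X,Y]$ under a suitable product order. In particular, the standard monomials modulo $G_I\cup G_J$ are exactly the products $\set{m_\beta n_\alpha}$, where $\set{m_\beta}$ are the standard monomials modulo $G_I$ (a $\mathbb{Q}$-basis for $\mathbb{Q}[X]/I$) and $\set{n_\alpha}$ are the standard monomials modulo $G_J$ (a $\mathbb{Q}$-basis for $\mathbb{Q}[Y]/J$). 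Now suppose $\varphi(p) = \seq{0_A}{i}$. Reduce $p$ modulo $G_I\cup G_J$ to a remainder $r$ with $p - r \in \gideal{I\cup J}$, and write $r = \sum_{\beta,\alpha} c_{\beta\alpha}\, m_\beta\, n_\alpha$ with $c_{\beta\alpha}\in\mathbb{Q}$. By the reverse direction, $\varphi(r) = \varphi(p) = \seq{0_A}{i}$, so for each $i$,
\[ \sum_\beta q_\beta(i)\, v(m_\beta) \;=\; 0_A \ \text{in } A, \quad \text{where}\quad q_\beta(i) \defeq \sum_\alpha c_{\beta\alpha}\, n_\alpha(a_1(i),\ldots,a_m(i))\in\mathbb{Q}. \]
Fixing $i$, the polynomial $\sum_\beta q_\beta(i)\, m_\beta\in\mathbb{Q}[X]$ vanishes on $v$ and so lies in $I$; but it is a $\mathbb{Q}$-combination of standard monomials, hence already reduced modulo $G_I$, so all $q_\beta(i) = 0$. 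Fixing $\beta$ and varying $i$, the sequence $\seq{q_\beta(i)}{i}$ is identically zero, which exhibits $\sum_\alpha c_{\beta\alpha}\, n_\alpha(Y)\in\mathbb{Q}[Y]$ as an algebraic relation among $\seq{a_j(i)}{i}$, putting it in $J$; being a combination of standard monomials modulo $G_J$, it too is zero, so every $c_{\beta\alpha}$ vanishes. Hence $r = 0$ and $p\in\gideal{I\cup J}$.

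The main obstacle is the two successive ``linear independence'' reductions that conclude $r = 0$: each one requires a normal-form representation that cleanly decouples the $X$-content of $r$ from its $Y$-content. This decoupling is exactly what the disjoint-variable Gr\"obner-basis observation (equivalently, the tensor-product isomorphism $\mathbb{Q}[X,Y]/\gideal{I\cup J} \cong \mathbb{Q}[X]/I \otimes_\mathbb{Q} \mathbb{Q}[Y]/J$) supplies; once this ingredient is in hand, the remainder of the argument is bookkeeping.
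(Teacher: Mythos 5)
Your proof is correct, and the forward direction takes a genuinely different (and cleaner) route than the paper's. The paper reduces $p$ modulo a Gr\"obner basis for $\gideal{I\cup J}$ w.r.t.\ an elimination order $\elimorder{X}$, writes the remainder by collecting $X$-monomials as $r = m^X_1 p^Y_1 + \dots + m^X_k p^Y_k + p^Y_{k+1}$, and then runs a somewhat delicate contradiction: first it argues that some index $i$ has $p^Y_1(a_1(i),\dots,a_m(i))\neq 0$ (else $p^Y_1\in J$ would contradict reducedness), then at that $i$ extracts a polynomial $h\in I$ from $\varphi(r)_i = 0_A$, and finally rewrites $r$ as $\tfrac{p^Y_1}{p^Y_1(a(i))}h + r'$ to contradict reducedness unless $m^X_1$ is constant, forcing $r\in\mathbb{Q}[Y]$ and hence $r\in J$, hence $r=0$. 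Your proof instead makes the tensor-product structure $\mathbb{Q}[X,Y]/\gideal{I\cup J}\cong\mathbb{Q}[X]/I\otimes_{\mathbb{Q}}\mathbb{Q}[Y]/J$ explicit by observing that $G_I\cup G_J$ is already a Gr\"obner basis (by Buchberger's coprime-leading-terms criterion, since the leading monomials live in disjoint variable sets), so the remainder is automatically a $\mathbb{Q}$-combination $\sum c_{\beta\alpha} m_\beta n_\alpha$ of products of standard monomials; two successive linear-independence arguments (first over $\mathbb{Q}[X]/I$ for each fixed $i$, then over $\mathbb{Q}[Y]/J$ for each fixed $\beta$) kill all coefficients. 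This decouples the $X$- and $Y$-content up front and avoids the paper's rescaling-by-$p^Y_1(a(i))$ step and the pointwise selection of a witnessing index $i$. The paper's argument is more elementary in that it uses only a single Gr\"obner basis under one elimination order and standard reduced-remainder bookkeeping; yours requires the extra (standard) fact about Gr\"obner bases of ideals generated by polynomials in disjoint variable blocks, but in return the two linear-independence steps fall out immediately and the proof would transfer verbatim to any setting admitting the same tensor decomposition.
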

\begin{proof}  
  ($\impliedby$) Suppose $p\in \gideal{I\cup J}$. Thus, $p=g + h$ for some $g\in \gideal{I}\subseteq \mathbb{Q}[X,Y]$ and $h\in \gideal{J} \subseteq \mathbb{Q}[X,Y]$.

  Because $g\in \gideal{I}$, $g = \sum_{j=1}^{k_1}f_jg_j$ for $f_j\in \mathbb{Q}[X, Y]$ and $g_j\in I$. Because $g_j\in I = \I{A}{\set{v}}$, $g^A_j(v) = 0_A$ for each $j$. Therefore, $\varphi(g) = \sum_j^{k_1} \varphi(f_j)\seq{g_j^A(v)}{i} = \sum_j^{k_1} \varphi(f_j)\seq{0_A}{i} = \seq{0_A}{i}$.

  Now we show the same for $h$. Because $h\in \gideal{J}$, $h = \sum_{j=1}^{k_2} f_jh_j$ for $f_j\in\mathbb{Q}[X, Y]$ and $h_j\in J\subseteq \mathbb{Q}[Y]$. Because $h_j\in J$, $\varphi(h_j) = \seq{h_j(a_1(i), \dots, a_m(i))(1_A)}{i} = \seq{0}{i}$ for each $j$. Therefore, $\varphi(h) = \sum_{j=1}^{k_2} \varphi(f_j)\varphi(h_j) = \sum_{j=1}^{k_2} \varphi(f_j)\seq{0_A}{i} = \seq{0_A}{i}$.
  
  Combining the previous paragraphs we have that, because $\varphi$ is a homomorphism, $\varphi(p) = \varphi(g) + \varphi(h) = \seq{0_A}{i} + \seq{0_A}{i} = \seq{0_A}{i}$.
  
  ($\implies$) Suppose $\varphi(p) = \seq{0_A}{i}$. 
  Let $r$ be $p$ reduced by a Gr\"obner basis for $\gideal{I\cup J}$ under the elimination order $\ll_X$. That is $p = g_I + g_J + r$ for some $g_I\in I$, $g_J\in J$ and for any other $g'\in \gideal{I\cup J}$ and $r'\in \mathbb{Q}[X,Y]$ with $p = g'+r'$, $\LM(r') \gg_X \LM(r)$. 
  Because $\varphi$ is a homomorphism we have $\seq{0_A}{i}=\varphi(p) = \varphi(g_I+g_J+r) = \varphi(g_I)+\varphi(g_J)+\varphi(r) = \seq{0_A}{i} +\varphi(r)=\varphi(r)$. 
  We can write $r$ by collecting $X$ monomials with respect to the order $\gg_X$ as follows
  \begin{align*}
      r = &m^X_1p^Y_1(y_1, \dots, y_m) + \dots + m^X_kp^Y_k(y_1,\dots, y_m) + \\
      &p^Y_{k+1}(y_1,\dots, y_m)
  \end{align*}
  where each $m^X_s$ is a distinct monomial of $X$ variables with $m^X_1 \gg_X m^X_j$ for $j = 2, \dots, k$, and each $p^Y_s$ is a polynomial in $\mathbb{Q}[Y]$.

  Observe that there exists some $i$ such that $p_1^Y(a_1(i), \dots, a_m(i))\neq 0$. If not, $\varphi(p_1^Y(y_1, \dots, y_m)) = \seq{p_1^Y(a_1(i),\dots,a_m(i))(1_A)}{i} = \seq{0_A}{i}$. Thus, $p_1^Y(a_1(i),\dots,a_m(i)) = 0$ for $i\in \mathbb{N}$, and therefore $p_1^Y$ is an algebraic relation over $\mathbb{Q}$ among $\seq{a_1(i)}{i},\dots,\seq{a_m(i)}{i}$. Then by definition $p_1^Y\in J$. But this contradicts the property that $r$ is reduced with respect to $J$. That is, if $p_1^Y(y_1, \dots, y_m) \in \gideal{J}$, then there exists a better $r'$ that does not contain the monomial $m_1^X$. Therefore, there is some $i$ such that $p_1^Y(a_1(i), \dots, a_m(i))\neq 0$.

  Let $i$ be such that $p_1^Y(a_1(i), \dots, a_m(i))\neq 0$. We have $\varphi(r)=\seq{0_A}{i}$ by assumption, and so $\varphi(r)_i = 0_A$, where
  \begin{equation*}
  \begin{split}
      \varphi(r)_i = &(m^X_1)^A(v)p^Y_1(a_1(i), \dots, a_m(i)) + \dots + (m^X_k)^A(v)p^Y_k(a_1(i), \dots, a_m(i)) + \\
      &p^Y_{k+1}(a_1(i), \dots, a_m(i))\\
  \end{split}
  \end{equation*}
 denotes the $i$th element of $\varphi(r)$.
  Because $\varphi(r)_i = 0_A$, we must have
  \begin{equation}\label{Eq:ReducePoly}
  \begin{split}
      &m^X_1p^Y_1(a_1(i), \dots, a_m(i)) + \dots + m^X_kp^Y_k(a_1(i), \dots, a_m(i)) + \\
      &p^Y_{k+1}(a_1(i), \dots, a_m(i)) \in I\\
  \end{split}
  \end{equation}
  Denote the polynomial in \labelcref{Eq:ReducePoly} as $h$. We can rewrite $r$ as follows
  \begin{align*}
      r &= \frac{p_1^Y(y_1,\dots, y_m)}{p_1^Y(a_1(i), \dots, a_m(i))} h + r'
  \end{align*}
  For some $r'$ containing $x$ monomials $m^X_2, \dots, m^X_k$. However, this (nearly) contradicts the property that $p$ is reduced with respect to $I$. That is, $p = (g_I + g_J +\frac{p_1^Y(y_1,\dots, y_m)}{p_1^Y(a_1(i), \dots, a_m(i))} h) + r'$ with $g_I + g_J+\frac{p_1^Y(y_1,\dots, y_m)}{p_1^Y(a_1(i), \dots, a_m(i))} h \in \gideal{I \cup J}$. Moreover, $\LM(r) \gg_X \LM(r')$, because $r'$ does not contain the monomial $m_1^X$ and $m_1^X\gg_X m_j^X$ for $j=2, \dots, k$. The only way to avoid the contradiction is to have $m^X_1$ be a constant. Therefore, because $r$ is reduced with respect to an order that eliminates $X$ variables and $\LM(r)\in \mathbb{Q}[Y]$, we have $r\in \mathbb{Q}[Y]$.

  Finally, because $r\in \mathbb{Q}[Y]$ and $\varphi(r) = \seq{0_A}{i}$, $r$ must be an algebraic relation over $\mathbb{Q}$ among $\seq{a_1(i)}{i}, \dots, \seq{a_m(i)}{i}$, and therefore $r \in J$. However, because $r$ must be reduced with respect to $J$, $r = 0$. Thus, $p = g_I +g_J$ with $g_I+g_J\in I+J$. Therefore, $p\in \gideal{I \cup J}$.
\end{proof}

\begin{lemmarep}\label{lem:solvableRelations}
  Let $\seq{a_{1m}(i)}{i}, \dots,\seq{a_{1m}(i)}{i}, \dots, \seq{a_{m1}(i)}{i}, \dots, \seq{a_{mm}(i)}{i}\in \mathbb{Q}^\omega$ and let $J\subseteq \mathbb{Q}[y_{11}, \dots, y_{1m}, \dots, y_{m1}, \dots, y_{mm}]$ be the ideal of algebraic relations over $\mathbb{Q}$ among these sequences. Let $A$ be a $\mathbb{Q}$-algebra, $X = \set{x_1,\dots, x_m}$, $v\in A^X$, and $I = \I{A}{\set{v}}\subseteq \mathbb{Q}[X]$.
    For every $i\in \mathbb{N}$ let $w^i \in A^{X}$ be defined as $w^i(x_j) = a_{1j}(i)v(x_1) + \dots +a_{mj}(i)v(x_m)$. Let $f : \mathbb{Q}[X] \rightarrow \mathbb{Q}[X,Y]$ be the polynomial homomorphism that maps $x_j \mapsto x_1y_{1j} + \dots + x_my_{mj}$ for all $j$. Then $\I{A}{\set{w^i : i \in \mathbb{N}}} = f^{-1}[\gideal{I \cup J}]$.
\end{lemmarep}
\begin{appendixproof}
Let $\varphi : \mathbb{Q}[X,Y] \rightarrow A^\omega$ be defined as in \cref{lem:algebraic-relations}.
Let $p(x_1, \dots, x_m) \in \mathbb{Q}[X]$. Then $p(x_1y_{11} + \dots+x_my_{1m}, \dots, x_1y_{m1} + \dots+x_my_{mm})\in \mathbb{Q}[X,Y]$. Note that 
\begin{align*}
    &\varphi(p(x_1y_{11} + \dots+x_my_{1m}, \dots, x_1y_{m1} + \dots+x_my_{mm})) \\
    &= p(\varphi(y_{11})\varphi(x_1) + \dots + \varphi(y_{1m})\varphi(x_m),\dots, \varphi(y_{m1})\varphi(x_1) + \dots + \varphi(y_{mm})\varphi(x_m))\\
        &= \seq{p(a_{11}(i)v(x_1) + \dots + a_{1m}(i)v(x_m), \dots, a_{m1}(i)v(x_1) + \dots + a_{mm}(i)v(x_m))}{i} \\
        &=\seq{p^A(w^i)}{i}.
    \end{align*}
    Therefore, by \cref{lem:algebraic-relations}, $\seq{p^A(w^i)}{i} = 0$ if and only if $p(x_1y_{11} + \dots+x_my_{1m}, \dots, x_1y_{m1} + \dots+x_my_{mm})\in \gideal{I\cup J}$.
    So we have the following chain
    \begin{align*}
        &p \in \I{A}{\set{w^i} : i\in \mathbb{N}}\\
        &\iff \seq{p^A(w^i)}{i} = 0\\
        &\iff p(x_1y_{11} + \dots+x_my_{1m}, \dots, x_1y_{m1} + \dots+x_my_{mm})\in \gideal{I\cup J}\\
        &\iff p(x_1, \dots, x_m)\in f^{-1}(\gideal{I \cup J})\\
    \end{align*}
\end{appendixproof}

Combining \cref{lem:inverse-image,lem:shiftRels,lem:solvableRelations} establishes the correctness of \cref{alg:solveMap}.
\begin{theorem}
    \Cref{alg:solveMap} solves \cref{prob:solvePolyMap}.
\end{theorem}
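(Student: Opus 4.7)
The plan is to assemble the three lemmas established in this subsection (\cref{lem:inverse-image,lem:shiftRels,lem:solvableRelations}) in sequence, following exactly the control flow of \cref{alg:solveMap}. The overall strategy is to factor the desired ideal $\I{A}{\set{f^n_A(v):n\in\mathbb{N}}}$ into (a) the $A$-algebraic information carried by $I=\I{A}{\set{v}}$ and (b) the purely rational algebraic relations among the scalar coefficient sequences obtained by solving $f$. \Cref{lem:solvableRelations} is precisely the bridge that combines (a) and (b), and \cref{alg:solveMap} computes each piece explicitly.

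First, I would justify the closed-form step on Line~\ref{Li:sequences}. Because $f$ is solvable, the trajectory $\set{f^n(x_j)}_{n}$ is c-finite and admits a closed form of the shape prescribed in \cref{Eq:c-finiteClosedForm}; evaluating at $v$ and collecting the $v(x_i)$ factors yields the displayed representation with coefficient sequences $a'_{ij}(n)=z_{ij}(n)+a_{ij}(n)\in\mathbb{Q}^\omega$. The key quantitative fact is that $z_{ij}(n)=0$ for $n\ge m$: the multiplicity of $0$ as a root of the characteristic polynomial is at most its degree $m=|X|$, so with $d:=m$ the hypothesis of \cref{lem:shiftRels} is satisfied. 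Applying \cref{lem:shiftRels} to the $m^2$ sequences $\set{a'_{ij}(n)}_n$ (noting that the shifted sequences $\set{a_{ij}(n+m)}_n$ have no zero characteristic root, so \citet[Algorithm 2]{KAUERS2008} computes $J'$ correctly) shows that the ideal $J$ built by the algorithm is exactly the ideal of algebraic relations over $\mathbb{Q}$ among $\set{a'_{ij}(n)}_n$.

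Second, I would invoke \cref{lem:solvableRelations}, taking the $a'_{ij}$ in the role of its $a_{ij}$. By construction of the closed form in Line~\ref{Li:sequences}, $f^i_A(v)(x_j)=a'_{1j}(i)v(x_1)+\dots+a'_{mj}(i)v(x_m)$, so the valuations $w^i$ of the lemma coincide with $f^i_A(v)$. The lemma then yields
\[
\I{A}{\set{f^i_A(v):i\in\mathbb{N}}} \;=\; h^{-1}\!\left[\gideal{I\cup J}\right],
\]
where $h:\mathbb{Q}[X]\to\mathbb{Q}[X,Y]$ is the homomorphism $x_j\mapsto x_1 y_{1j}+\dots+x_m y_{mj}$ appearing in the final line of the algorithm. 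Finally, \cref{lem:inverse-image} says that $\textit{inv.image}(h,I\cup J)$ generates exactly this inverse image, so the algorithm's return value is a basis for $\I{A}{\set{f^i_A(v):i\in\mathbb{N}}}$, as required.

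The main obstacle is really just the bookkeeping around the ``0 is a characteristic root'' case: one must confirm that the bound $d=m$ suffices uniformly across all $m^2$ coefficient sequences, so that the shift by $m$ simultaneously eliminates every $z_{ij}$ term and permits a single application of Kauers' procedure. Once this is verified, the rest is a direct chaining of the three lemmas and the definition of $\textit{inv.image}$.
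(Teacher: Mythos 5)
Your proof is correct and follows exactly the paper's own approach: the paper dispenses with the proof in a single sentence (``Combining \cref{lem:inverse-image,lem:shiftRels,lem:solvableRelations} establishes the correctness of \cref{alg:solveMap}''), and your argument supplies precisely the chaining of those three lemmas, with the right bookkeeping (the bound $d = m = |X|$ on the multiplicity of the zero root, and the observation that the shift by $m$ lands in the regime where Kauers' algorithm applies) that the paper leaves implicit.
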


\subsection{Closing Transition Ideals}
The result of \cref{alg:solveMap} produces a polynomial ideal that summarizes the algebraic relations over $\mathbb{Q}$ of a solvable polynomial map. In this subsection, we show how the result of \cref{alg:solveMap} can be used to compute $T^*$ for a (ultimately) solvable transition ideal $T$. Recall that every solvable transition ideal $T$ comes with a solvability witness $p$. The basic idea to compute $T^*$ is to use \cref{alg:solveMap} to summarize the algebraic relations over $\mathbb{Q}$ among the sequences defined by $p$. However, \cref{prob:solvePolyMap}, which is solved by \cref{alg:solveMap}, is defined over a $\mathbb{Q}$-algebra $A$. Hence, in this subsection we work to motivate and explain that in order to calculate $T^*$, the $\mathbb{Q}$-algebra we want to instantiate \cref{prob:solvePolyMap} with is $A =\mathbb{Q}[X, X']/\dom^*(T)$.

Before we talk of $\mathbb{Q}$-algebras, we make a brief observation of the structure of solvable transition ideal. Intuitively, a solvable transition ideal can be broken into a domain part, containing only unprimed variables, and a transition part. The next lemma formalizes this point.
\begin{lemmarep}\label{lem:solvableIdealSyn}
    Let $T\subseteq \mathbb{Q}[X,X']$ be a solvable transition ideal, and let $p : \mathbb{Q}[X] \rightarrow \mathbb{Q}[X]$ be a solvable witness for $T$. Then $T^n = \dom(T^n) + \gideal{\set{x_i' - p^n(x_i): 1\le i\le m}}$ for $1 \le n$.
\end{lemmarep}
\begin{appendixproof}
    We prove the lemma by induction on $n$.
    Let $n=1$. Because $T$ is solvable, $x_i' - p(x_i)\in T$ for $x_i\in X$. Thus, a Gr\"obner basis for $T$ with respect to $\elimorder{X'}$ is of the form
    \[
    \gideal{f_1(X), \dots, f_k(X), x_1'-p(x_1), \dots, x'_m - p(x_m)} = \dom(T) + \gideal{\set{x_i' - p(x_i): 1\le i\le m}}.
    \]

    Now suppose the lemma holds for $n$. We wish to show the lemma holds for $T^{n+1}$.
    \begin{align*}
        T^{n+1} = &T \cdot T^n\\
        = &(T[X'\rightarrow X''] + T^n[X\rightarrow X'']) \cap \mathbb{Q}[X, X']\\
        = &(\dom(T)[X'\rightarrow X''] + \gideal{\set{x_i'' - p(x_i): 1\le i\le m}} \\
        &+ \dom(T^n)[X\rightarrow X''] + \gideal{\set{x_i' - p^n(x_i)'': 1\le i\le m}}) \cap \mathbb{Q}[X, X']\\
        = &(\dom(T) + \dom(T^n)[X\rightarrow p(X)] + \\
        &\gideal{\set{x_i' - p^n(p(x_i)): 1\le i\le m}} + \gideal{ \set{x_i'' - p(x_i): 1\le i\le m}}) \cap \mathbb{Q}[X, X']\\
        = &\dom(T^{n+1}) + \gideal{\set{x_i' - p^{n+1}(x_i): 1\le i\le m}}.
    \end{align*}
\end{appendixproof}

From \cref{lem:solvableIdealSyn}, we see that the iterated behavior of the transition ideal is \emph{mostly} captured by the iterated behavior of the polynomial witness; what is missing is the $\dom(T^n)$ part. Note that $\dom(T^n)$ is an ideal for each $n$. If we let $A$ be the $\mathbb{Q}$-algebra $\mathbb{Q}[X,X']/I$ for some ideal $I$, we can define a sequence like the one in \cref{prob:solvePolyMap} that uses a solvable witness $p$ to transition not only variables but sets of polynomials with respect to $I$. This can be formalized in the language of \cref{prob:solvePolyMap} for a solvable transition ideal $T\subseteq\mathbb{Q}[X,X']$ with solvability witness $p:\mathbb{Q}[X] \rightarrow \mathbb{Q}[X]$ as follows:
\begin{itemize}
    \item Let $A = \mathbb{Q}[X,X']/I$.
    \item Let $\hat{p}:\mathbb{Q}[X,X']\rightarrow\mathbb{Q}[X,X']$ extend $p$ as $\hat{p}(x_i) = x_i$ if $x_i\in X$, and $\hat{p}(x_i') = (p(x_i))'$ if $x_i'\in X'$.
    \item Let $v\in A^{X\cup X'}$ be defined as $v(x_i) = x_i + I$ if $x_i\in X$ and $v(x_i') = p(x_i) + I$ if $x_i'\in X'$.
\end{itemize}
The question then is what should we take for $I$. We want $I$ to be the polynomials not captured by the iteration of the solvable witness. Thus, from \cref{lem:solvableIdealSyn} these are the polynomials in $\dom(T^n)$. For \cref{prob:solvePolyMap} $I$ needs to be fixed, so we have two obvious options for $I$, the domain of $T$, $\dom(T)$, or the invariant domain of $T$, $\dom^*(T)$. The next example shows what happens if we use the domain of $T$ and why that gives us a mismatch for computing $T^*$.
\begin{example}\label{ex:notInvDom}
    Consider the solvable transition ideal $T = \gideal{y' - y - z, z' - 3z, y-z-1}$, with $\dom(T) = \gideal{y - z - 1}=I$. A solvable witness for $T$ is $p(y) = y+z$ and $p(z) = 3z$. Let $A = \mathbb{Q}[X, X']/I$, $\hat{p}:\mathbb{Q}[X,X']\rightarrow\mathbb{Q}[X,X']$ extend $p$ as above, and let $v\in A^{X\cup X'}$ with $v(y) = y+I$, $v(z) = z+I$, $v(y') = y+z + I$, and $v(z') = 3z+I$. In the language of \cref{prob:solvePolyMap}, we have $\I{A}{\set{v}} = T$. $\hat{p}_A$ defines the following sequence\footnote{Note that $y+z+\gideal{y-z-1} =  2z+1+\gideal{y-z-1}$}:
    \[
    \left(\begin{Bmatrix}
        y\mapsto y+I\\
        z \mapsto z+I\\
        y' \mapsto 2z + 1 + I\\
        z' \mapsto 3z + I
    \end{Bmatrix}, 
    \begin{Bmatrix}
        y\mapsto y+I\\
        z \mapsto z+I\\
        y' \mapsto 5z + 1 + I\\
        z' \mapsto 9z + I
    \end{Bmatrix}, 
    \begin{Bmatrix}
        y\mapsto y+I\\
        z \mapsto z+I\\
        y' \mapsto 14z + 1 + I\\
        z' \mapsto 27z + I
    \end{Bmatrix}, 
    \dots
    \right).
    \]
    Taking $\I{A}{\set{v^i}}$ for each $i$ of the above sequence is nearly $T, T^2, T^3, \dots$. However, $T^2 = \langle y' - y - z, z' - 3z, y - 1, z\rangle = \gideal{y' - 1, z', y - 1, z}$, but $y' -1 \not\in \I{A}{\set{v^2}}$. Moreover, for \emph{every} $i\ge 2$, $\I{A}{\set{v^i}} \neq T^i$.
\end{example}
The essential problem with the previous example is that the domain of $T$ is not stable for higher iterations of $T^i$. If instead of $I = \dom(T)$ in the previous example we used $I= \dom^*(T)$ then we would have the equality $\I{A}{\set{v^i}} = T^i$ for $i\ge 2$. This observation that equality can be recovered for some $i$ by using the \emph{invariant domain} is our key insight for computing $T^*$. 
The issue of \cref{ex:notInvDom} is fixed using the reasoning in the following lemma.
\begin{lemmarep}\label{lem:transitionIdealclosure}
    Let $T\subseteq \mathbb{Q}[X,X']$ be a solvable transition ideal with solvability witness $p$. Let $N\ge 1$ be such that $\dom(T^N) = \dom^*(T)$, and define $I \defeq \dom^*(T)$. Let $A$ be the $\mathbb{Q}$-algebra $\mathbb{Q}[X,X']/I$. Let $\hat{p} : \mathbb{Q}[X, X'] \rightarrow \mathbb{Q}[X, X']$ be the homomorphism defined by $\hat{p}(x_i) = x_i$ and $\hat{p}(x'_i) = p(x_i)'$. Let $v\in A^{X\cup X'}$ be the valuation defined by $v(x_i) = x_i + I$ and $v(x'_i) = p(x_i) + I$. Then for $1\le N \le n$, $T^{n} = \I{A}{\set{\hat{p}_A^{n-1}(v)}}$. Furthermore, if $1\le n < N$ then $T^{n} \subseteq \I{A}{\set{\hat{p}_A^{n-1}(v)}}$
\end{lemmarep}
\begin{appendixproof}
   Consider $\hat{p}^{n-1}_A(v)$ for $n\ge 1$. On a variable $x_i \in X$ we have $\hat{p}^{n-1}_A(v)(x_i) = v(x_i) = x_i + I$. For a variable $x'_i\in X'$ we have 
   \[
   \hat{p}^{n-1}_A(v)(x_i') = (\hat{p}^{n-1}(x_i'))^A(v) = p(p^{n-1}(x_i)) + I = p^n(x_i) + I.
   \]

   More succinctly,
   \[
\hat{p}^{n-1}_A(v) = \begin{Bmatrix}
        x_i\mapsto x_i + I\\
        x'_i \mapsto p^{n}(x_i) + I
    \end{Bmatrix}.
\]
Note that if $n\ge N$, $\dom(T^n) = \dom(T^N) = \dom^*(T) = I$. But if $n < N$, $\dom(T^n)\subseteq \dom^*(T) = I$. Thus, if $n\ge N$ we have
\begin{align*}
    \I{A}{\set{\hat{p}_A^{n-1}(v)}} = &\gideal{\set{x_i - q(x): q(x) \in x_i + I, 1\le i \le m}} +\\ 
    &\gideal{\set{x'_i - q(x): q(x) \in p^n(x_i) + I, 1\le i \le m}}\\
    = & I + \gideal{\set{x'_i - p^n(x_i) : 1 \le i \le m}}\\
    = & \dom(T^n) + \gideal{\set{x'_i - p^n(x_i) : 1 \le i \le m}}\\
    = & T^n
\end{align*}
The justification for the last step comes from \cref{lem:solvableIdealSyn}. If $n < N$ then the second to last equals becomes $\supseteq$.
\end{appendixproof}
\begin{corollaryrep}
    \Cref{lem:transitionIdealclosure} also holds for ultimately solvable transition ideals.
\end{corollaryrep}
\begin{appendixproof}
    Let $T$ be an ultimately solvable transition ideal. Then by definition $T + \dom^*(T)$ is solvable. By \cref{lem:transitionIdealclosure} we have $(T+\dom^*(T))^n\subseteq  \I{A}{\set{\hat{p}_A^{n-1}(v)}}$ for $1\le n < N$, and $(T+\dom^*(T))^n=\I{A}{\set{\hat{p}_A^{n-1}(v)}}$ for $n\ge N$. It can readily be verified that $(T+\dom^*(T))^n = T^n + \dom^*(T)$ for any transition ideal $T$. Furthermore, if $n\ge N$ then $T^n+\dom^*(T) = T^n$. Therefore, for $1\le n < N$ 
    \[T^n \subseteq T^n + \dom^*(T) = (T+\dom^*(T))^n \subseteq \I{A}{\set{\hat{p}_A^{n-1}(v)}}.\]
    For $n\ge N$, 
    \[T^n = T^n + \dom^*(T) = (T+\dom^*(T))^n = \I{A}{\set{\hat{p}_A^{n-1}(v)}}.\]
\end{appendixproof}

\begin{example}\label{ex:withInvDom}
    Recall \cref{ex:notInvDom}, but with $I = \dom(T^2) = \dom(T^3) = \dom^*(T) = \gideal {y-1, z}$. $\hat{p}_A$ defines the following sequence:
    \[
    \left(\begin{Bmatrix}
        y\mapsto 1+I\\
        z \mapsto 0+I\\
        y' \mapsto 1 + I\\
        z' \mapsto 0 + I
    \end{Bmatrix}, 
    \begin{Bmatrix}
        y\mapsto 1+I\\
        z \mapsto 0+I\\
        y' \mapsto 1 + I\\
        z' \mapsto 0 + I
    \end{Bmatrix},
    \begin{Bmatrix}
        y\mapsto 1+I\\
        z \mapsto 0+I\\
        y' \mapsto 1 + I\\
        z' \mapsto 0 + I
    \end{Bmatrix},
    \dots
    \right).
    \]
\end{example}
Informally, \cref{lem:transitionIdealclosure} states that the long-running relations of a solvable transition ideal $T$ is \emph{exactly} captured by $\I{A}{\set{\hat{p}^i(v) : i\in \mathbb{N}}}$. This is what \cref{ex:withInvDom} shows. However, we cannot just take the long-running relations of $T$ as our summary $T^*$. This is because for iterations before the invariant domain has stabilized we do not have equality. However, the invariant domain must stabilize in a finite number of iterations, and can then be recovered via ideal intersection. This leads to the following theorem.
\begin{theoremrep}\label{the:tstar}
    Let $T$ be a (ultimately) solvable transition ideal $T\subseteq \mathbb{Q}[X,X']$ with solvability witness $p$. Let $N\ge 1$ be such that $\dom(T^N) = \dom^*(T)$ = I.  Let $\hat{p} : \mathbb{Q}[X, X'] \rightarrow \mathbb{Q}[X, X']$ be the homomorphism defined by $\hat{p}(x_i) = x_i$ and $\hat{p}(x'_i) = p(x_i)'$. Let $v\in A^{X\cup X'}$ be the valuation defined by $v(x_i) = x_i + I$ and $v(x'_i) = p(x_i) + I$. Then
    \[
    \bigcap_{i=0}^\infty T^i = \left(\bigcap_{i=0}^{N-1} T^i\right) \cap (\I{A}{\set{\hat{p}^i(v) : i\in \mathbb{N}}}).
    \]
\end{theoremrep}
\begin{appendixproof}
\begin{align*}
    \left(\bigcap_{i=0}^{N-1} T^i\right) \cap \left(\I{A}{\set{\hat{p}^i(v) : i\in \mathbb{N}}}\right) &= \left(\bigcap_{i=0}^{N-1} T^i\right) \cap \left(\bigcap_{i=1}^\infty \I{A}{\set{\hat{p}^{i-1}(v)}}\right)\\
    &= \left(\left(\bigcap_{i=0}^{N-1} T^i\right) \cap \left(\bigcap_{i=1}^{N-1} \I{A}{\set{\hat{p}^{i-1}(v)}}\right) \cap \left(\bigcap_{i=N}^{\infty} \I{A}{\set{\hat{p}^{i-1}(v)}}\right)\right)\\
    &= T^0 \cap \left(\bigcap_{i=1}^{N-1} (T^i \cap (\I{A}{\set{\hat{p}^{i-1}(v)}}))\right) \cap \left(\bigcap_{i=N}^{\infty} \I{A}{\set{\hat{p}^{i-1}(v)}})\right)\\
    &= T^0 \cap \left(\bigcap_{i=1}^{N-1} T^i\right) \cap \left(\bigcap_{i=N}^{\infty} T^i\right) \\
    &= \bigcap_{i=0}^\infty T^i
\end{align*}
The second to last step is justified by \cref{lem:transitionIdealclosure}.
\end{appendixproof}
The right-hand-side of the equation in \cref{the:tstar} is computable. The term $\I{A}{\set{\hat{p}^i(v) : i\in \mathbb{N}}}$ can be computed by \cref{alg:solveMap} with $\I{A}{\set{v}} = \dom^*(T) + T$. The term $(\bigcap_{i=0}^{N-1} T^i)$ is a finite intersection of polynomial ideals which can be computed via Gr\"obner basis techniques. Asymptotically, the Gr\"obner basis calculations for computing intersections as well as the Gr\"obner basis calculations in \cref{alg:solveMap} dominate the running time, making the overall computation exponential.

\begin{example}\label{ex:tstar}
    Recall $T = \gideal{y' - y - z, z' - 3z, y-z-1}$ from \cref{ex:notInvDom}.
    \[
    T^* = \gideal{3z^2-4zz' + (z')^2, yz'-yz + z^2-zz' - z' + z, 2y' - 2y - z' + z}
    \]
\end{example}


\section{Loop summarization modulo LIRR}
\label{sec:summary}
Loop summarization is the problem of computing, for a given transition formula $F$ representing the body of some loop, an over-approximation of the reflexive transitive closure of $F$.
This section describes how to combine the components introduced in the previous two sections to accomplish this task.  We prove the key property that our loop summarization procedure is monotone.  Finally, we discuss how this procedure can be combined with other summarization techniques to enhance the ability of an algebraic program analyzer to generate non-linear loop summaries.

Our iteration operator takes a four-step approach (pictured \cref{fig:overviewDiag}).  Given an input transition formula $F$,
\begin{enumerate}
\item Compute the transition ideal $\I{\LIRR}{F}$ of $F$ (using the algorithm of \citet{POPL:KKZ2023})
\item Compute a solvable reflection $\tuple{t,T}$ of $\I{\LIRR}{F}$ (\cref{sec:reflection})
\item Compute $T^*$ (\cref{sec:closure})
\item Calculate the formula corresponding to the image of $T^*$ under $t$.
\end{enumerate}
More succinctly, we define an operator $\fstar{(-)} : \TF \rightarrow \TF$ to be
\[ \fstar{F} \defeq \iformula{\image{t}{\istar{T}}} \]
where $\tuple{t,T}$ is a solvable reflection of $\I{\LIRR}{F}$.
Naturally, one may repeat this recipe for defining a loop summarization operator by using ultimately solvable transition ideals (\cref{sec:ultimately-solvable}) and/or polynomial simulations (\cref{sec:polynomial-simulations}), and the soundness and monotonicity results that we prove below hold also for these variants.

\begin{example}\label{Ex:idealFromForm}
Consider the transition formula $F$ below, and its associated transition ideal:
\[F = \left(\begin{array}{r@{}l}
& (k' = k + 1) \land (x' = y) \land (y' = x)\\
\land & \left(\begin{array}{r@{}l}
& (z \geq 0 \land z' = w + z \land w' = x^2)\\
\lor & (\lnot (z \geq 0) \land w' = w + z \land z' = x^2)
\end{array}
\right)
\end{array} \right)
\hspace{0.25cm}
\I{\LIRR}{F} =
 \gideal{
 \begin{array}{l}
w'x^2 - w' z' + x^2 z' - x^2,\\
-w' + w + x^2 - z' + z,\\
x' - y,\\
y' - x,\\
k' - k - 1
\end{array}
}
\]
Notice that, while $F$ employs a rich logical language involving disjunction, negation, and inequalities, its ideal $\I{\LIRR}{F}$ is defined
by the set of polynomials $p$ such that $F$ entails $p = 0$.
A solvable reflection of $\I{\LIRR}{F}$ is
$\tuple{t,T}$ where $t$ is the map that sends $a \mapsto x$, $b \mapsto y$, $c \mapsto (w+z)$, and $d \mapsto k$, and $T$ is the ideal
$\gideal{ a' - b, b' - a, c' - c - a^2, d' - d - 1}$.  The closure of $T$
is $T^* = \gideal{ab - bb' + (b')^2 - ab', a'+b' - a - b, b^2d'+a^2d'-a^2d-b^2d-b^2+ab'+bb'-ab - 2c' + 2c}$. The (ideal generated by) the image of
$T^*$ under $\dvext{t}$ is 
\[\gideal{
\begin{array}{c}
      xy - yy' + (y')^2 - xy',x'+y' - x - y,\\
      y^2k'+x^2k'-x^2k-y^2k-y^2+xy'+yy'-xy - 2(w' + z') + 2(w+z)
\end{array}}.\]
Finally, we have $\fstar{F} \defeq xy + (y')^2 = yy' + xy' \land x'+y' = x+y \land y^2k'+x^2k'+xy'+yy' - 2(w' + z') = x^2k + y^2k+y^2+xy - 2(w+z)$.
\end{example}

\begin{theoremrep}[Soundness]
Let $F$ be a transition formula.  For any $n \in \mathbb{N}$, we have
$F^n \models_{\LIRR} \fstar{F}$
\end{theoremrep}
\begin{appendixproof}
    Let $\tuple{t,T}$ be a solvable reflection of $\I{\LIRR}{F}$.
    We may show that
\[
    \I{\LIRR}{F^n} \supseteq \I{\LIRR}{F}^n \supseteq \image{t}{T^n} \supseteq \image{t}{T^*}
    \]
    by induction on $n$.  The base case $n=0$ is trivial.  The induction step follows from the fact that (1) the sequential composition operator for ideals over-approximates the sequential composition operator for transition formulas, and (2) sequential composition for transition ideals preserves simulation.
\end{appendixproof}

\begin{theorem}[Monotonicity]
  Let $F$ and $G$ be transition formulas. 
  If $F \models_\LIRR G$, then $\fstar{F} \models_\LIRR \fstar{G}$.
\end{theorem}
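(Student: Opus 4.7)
The plan is to use the Galois connection to reduce the goal $\fstar{F} \models_\LIRR \fstar{G}$ to a set inclusion between the generators of the two summary ideals, obtain a linear simulation between the solvable reflections of $\I{\LIRR}{F}$ and $\I{\LIRR}{G}$ via the universal property of the reflection, and then observe that simulations are preserved under Kleene closure. Write $\tuple{t_F, T_F}$ and $\tuple{t_G, T_G}$ for the solvable reflections of $\I{\LIRR}{F}$ and $\I{\LIRR}{G}$. By the Galois connection between $\I{\LIRR}{-}$ and $\iformula{-}$, the assumption $F \models_\LIRR G$ is equivalent to $\I{\LIRR}{F} \supseteq \I{\LIRR}{G}$, and it suffices for the conclusion to establish the set inclusion $\image{t_G}{\istar{T_G}} \subseteq \gideal{\image{t_F}{\istar{T_F}}}$.

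Next, I would observe that the polynomial homomorphism $t_G$, originally presented as a linear simulation $\I{\LIRR}{G} \to T_G$, is also a linear simulation $\I{\LIRR}{F} \to T_G$: for $p \in T_G$ we have $\dvext{t_G}(p) \in \I{\LIRR}{G} \subseteq \I{\LIRR}{F}$. Since $T_G$ is solvable, the universal property of the solvable reflection $\tuple{t_F, T_F}$ of $\I{\LIRR}{F}$ (\cref{thm:solvable-reflection-exist}) produces a unique linear simulation $\hat{t}_G : T_F \to T_G$ with $t_G = \simcompose{\hat{t}_G}{t_F} = t_F \circ \hat{t}_G$. Intuitively, $\tuple{t_F, T_F}$ is the \emph{best} solvable abstraction of $\I{\LIRR}{F}$, so any solvable abstraction of $\I{\LIRR}{F}$---including the one arising from the reflection of the logically weaker $\I{\LIRR}{G}$---must factor through it.

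Finally, I would prove the auxiliary fact that any simulation $s : T \to U$ between transition ideals lifts to a simulation between their Kleene closures, i.e.\ $\dvext{s}[\istar{U}] \subseteq \istar{T}$. This follows by a routine induction showing $\dvext{s}[U^n] \subseteq T^n$, using that sequential composition of transition ideals preserves simulation (the same fact invoked in the soundness proof above), and then intersecting over all $n$. Applied to $\hat{t}_G$, this yields $\dvext{\hat{t}_G}[\istar{T_G}] \subseteq \istar{T_F}$. Since the doubled-vocabulary extension respects composition, $\dvext{t_G} = \dvext{t_F} \circ \dvext{\hat{t}_G}$, so every $p \in \istar{T_G}$ satisfies $\dvext{t_G}(p) = \dvext{t_F}(\dvext{\hat{t}_G}(p)) \in \dvext{t_F}[\istar{T_F}] = \image{t_F}{\istar{T_F}}$, giving the desired inclusion. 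The main obstacle is keeping the arrows straight: simulations point opposite to their underlying polynomial homomorphisms, and the universal property must be applied to $\I{\LIRR}{F}$ rather than $\I{\LIRR}{G}$; once $\hat{t}_G$ is in hand the remaining manipulations are mechanical. The same template extends verbatim to the ultimately solvable and polynomial-simulation variants, whose reflections satisfy analogous universal properties.
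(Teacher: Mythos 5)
Your proposal is correct and follows essentially the same argument as the paper's proof: observe that $\I{\LIRR}{F} \supseteq \I{\LIRR}{G}$ makes the reflection simulation for $G$ also a simulation out of $\I{\LIRR}{F}$, invoke the universal property to factor it through the reflection of $\I{\LIRR}{F}$, and push the resulting simulation through the Kleene closure. The only cosmetic difference is that you spell out the inductive step that simulations commute with $(-)^*$, which the paper invokes implicitly (as it also does in the soundness proof).
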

\begin{proof}
    Suppose $F \models_{\LIRR} G$.  Let $\tuple{t,T}$ be a solvable reflection of $\I{\LIRR}{F}$, and let $\tuple{u,U}$ be a solvable reflection of $\I{\LIRR}{G}$.
    Since $F \models_{\LIRR} G$, we must have $\I{\LIRR}{F} \supseteq \I{\LIRR}{G}$, and thus $u$ is a simulation from $\I{\LIRR}{F}$ to $U$.  Since $U$ is solvable and $\tuple{t,T}$ is a solvable reflection of $\I{\LIRR}{F}$, there is a (unique) simulation $v : T \rightarrow U$ such that $u = \simcompose{v}{t} = t \circ v$.  Since $v$ is a simulation, we have
    $\istar{T} \supseteq \image{v}{\istar{U}}$, and so
    \[ \image{t}{\istar{T}} \supseteq \image{t}{\image{v}{\istar{U}}} = \image{(t \circ v)}{\istar{U}} = \image{u}{\istar{U}}  \]
    Since $\fstar{F} = \iformula{\image{t}{\istar{T}}}$
    and $\fstar{G} = \iformula{\image{u}{\istar{U}}}$, we have
    $\fstar{F} \models_\LIRR \fstar{G}$.
\end{proof}

\subsection{Modular Design of Loop Summarization Operators}

The loop summarization operator that is defined in this paper is designed to compute polynomial invariants.  Such invariants are often just \textit{a} component of a correctness argument for a program---for example, a correctness argument may rely upon reasoning about inequalities, or may require a disjunctive invariant.  Our loop summarization operator can be incorporated in a broader invariant generation scheme by using various combinators to combine summarization operators.  For instance, the simplest such combinator is a \textit{product}, which combines two loop summarization $\oast_1$ and $\oast_2$ into one $\oast_1 \times \oast_2$ by taking their conjunction:
\[ F^{\oast_1 \times \oast_2} \defeq F^{\oast_1} \land F^{\oast_2}\]
Provided that both the $\oast_1$ and $\oast_2$ operators are monotone, then (1) so is their product, and (2) the resulting analysis is at least as precise as either component analysis.

Another kind of summarization combinator is the refinement technique proposed in \citet{POPL:CBKR2019}.  This combinator exposes phase structure in loops, and in particular enables a "base" summarization operator that may only generate conjunctive invariants to produce disjunctive invariants.  In addition to monotonicity, this combinator requires four additional axioms in order to guarantee that it improves analysis precision.  The following proposition states that indeed our summarization operator satisfies these conditions.

\begin{propositionrep}
Let $F$ be a transition formula.   Then the following hold
\begin{itemize}
\item (Reflexivity) $1 \models_{\LIRR} \fstar{F}$
\item (Extensivity) $F \models_{\LIRR} \fstar{F}$
\item (Transitivity) $\fstar{F} \circ \fstar{F} \equiv_{\LIRR} \fstar{F}$
\item (Unrolling) For any natural number $n$, $(F^n)^* \models_{\LIRR} F^*$.
\end{itemize}
\end{propositionrep}
\begin{appendixproof}
    Reflexivity, Exensivity, and Transitivity are straightforward. We shall prove unrolling.
    
    Let $n$ be a natural number.
  Let $\tuple{t,T}$ be a solvable reflection of $\I{\LIRR}{F^n}$, and let $\tuple{u,U}$ be a solvable reflection of $\I{\LIRR}{F}$.  Since
  $\I{\LIRR}{F}^n \subseteq \I{\LIRR}{F^n}$, $\tuple{t,T}$ is a solvable reflection of $\I{\LIRR}{F^n}$, and $U^n$ is solvable, there is a (unique) simulation $v : U^n \rightarrow T$ such that $t \circ v = u$.
  Since $\fstar{(F^n)} = \iformula{\image{t}{\istar{T}}}$
  and $\fstar{(F^n)} = \iformula{\image{u}{\istar{U}}}$, it is sufficient to prove that
  $\image{t}{\istar{T}} \supseteq \image{u}{\istar{U}}$.
  Observe that since $v : U^n \rightarrow T$ is a simulation, we have
  \[
  \image{t}{\istar{T}} = \image{t}{\bigcap_{i=0}^\infty T^i}
  \supseteq
\image{t}{\bigcap_{i=0}^\infty \image{v}{U^{ni}}}
  \supseteq
\image{t}{\image{v}{\bigcap_{i=0}^\infty U^{ni}}}
  \]
  Then since $t \circ v = u$, we have
  \[
\image{t}{\istar{T}}
\supseteq \image{t}{\image{v}{\bigcap_{i=0}^\infty U^{ni}}}
= \image{(t \circ v)}{\bigcap_{i=0}^\infty U^{ni}}
= \image{u}{\bigcap_{i=0}^\infty U^{ni}}
\supseteq \image{u}{\bigcap_{i=0}^\infty U^{i}} = \image{u}{U^*} \qedhere
  \]
\end{appendixproof}

\section{Experimental evaluation} \label{sec:evaluation}
We consider two experimental questions concerning our methods for synthesizing loop invariants for general programs:
\begin{enumerate}
    \item (\Cref{sec:exp-ques-gen}) How do our techniques apply to the task of verifying general programs?
    \item (\Cref{sec:exp-ques-aligator}) How do our techniques for generating polynomial invariants perform on programs for which other tools guarantee completeness? 
\end{enumerate}
In relation to each of these questions we also want to understand the performance, both in terms of accuracy and running time, of using linear simulations as well as polynomial simulations of bounded degree for extracting solvable transition ideals from transition ideals. 
\subsection{Experimental Setup}
    \paragraph{Implementation} We implemented the techniques described in this paper in a tool called \Tool. Our implementation relies on 
    \begin{itemize}
        \item Chilon and ChilonInv \cite{POPL:KKZ2023}, for \LIRR~operations and generating invariant inequalities, respectively.
        \item The \textsc{FGb} library \cite{fgb} for an implementation of the F4 algorithm \cite{FAUGERE199961}, which we use for computing of Gr\"obner bases.
        \item \textsc{Flint} \cite{flint} for integer lattice computations and \textsc{Arb} \cite{arb} for numerical polynomial root finding. These operations are required to implement the algorithm of \citet{Thesis:Ge} used in \cref{alg:solveMap}.
    \end{itemize}

    \Tool can be configured to use either linear or quadratic simulations, and
    either solvable or ultimately solvable transition ideals.  Our testing
    revealed that (1) the difference between using solvable and ultimately
    solvable is negligible (both in success rate and runtime
    performance), and (2) the cost of na\"{i}ve computation of the full
    inverse image $\invimage{f_{X,2}}{-}$ for quadratic simulations is
    prohibitively high.  In the following, we report on two configurations of
    \Tool: \textit{USP-Lin} is the product of the ChilonInv domain and iteration
    operator induced by ultimately solvable linear reflections, \textit{USP-Quad} is
    the product of USP-Lin and the iteration operator induced by solvable
    quadratic simulations with a single stratum (which necessitates only
    computing the affine polynomials in $\invimage{f_{X,2}}{-}$, and is
    therefore more tractable).

\paragraph{Environment} We ran all experiments on a virtual machine (using Oracle VirtualBox), with a guest OS of Ubuntu 22.04 allocated with 8 GB of RAM, using a 4-core Intel Core i7-4790K CPU @ 4.00 GHz. All tools were run with the \textsc{BenchExec} \cite{bechexec} tool using a time limit of 300 seconds on all benchmarks.

\paragraph{Benchmarks} Our 202 benchmarks programs are sourced from the set of safe\footnote{That is, the error location is truly unreachable.} benchmarks from the \texttt{c/ReachSafety-Loops} subcategory of the Software Verification Competition (SV-COMP) \cite{svcomp}. We divided our 202 benchmarks into a loops category consisting of 176 programs, and an NLA category consisting of 26 benchmarks. The NLA benchmarks are modified versions of the programs in the \texttt{nla-digbench} set from SV-COMP, intended to evaluate the strength of \Tool's ability to generate non-linear invariants.  The \texttt{nla-benchmark} programs from SV-COMP have ``proposed invariants'' at each loop header, as well as assertions at the end of the programs as post conditions; we obtained the NLA suite by removing these ``proposed invariants''.  As a result, non-linear invariants must be \textit{synthesized} in order to prove the post-condition (rather than simply \textit{verifying} that the proposed invariant is an invariant, and implies the post-condition).  The program from \cref{fig:overviewEx} is an example of a program in the NLA suite.


\paragraph{Comparison Tools} We have compared our techniques with ChilonInv \cite{POPL:KKZ2023}, CRA \cite{POPL:KCBR2018}, VeriAbs 1.5.1-2 \cite{veriabs}, and ULTIMATE Automizer 0.2.3 \cite{uautomizer}. ChilonInv and CRA use a similar verification strategy of extracting implied solvable invariants of loop bodies to generate invariants of loops. VeriAbs and ULTIMATE Automizer are high performers at SV-COMP and provide context to the overall results. The strategies of ChilonInv, USP-Lin, and USP-Quad are all monotone algebraic analyses and the refinement technique of \citet{POPL:CBKR2019} applies. Refinement is guaranteed to improve the precision of these three techniques, and so we have employed refinement in the comparison of these three strategies.

\subsection{How do our Techniques Perform on a Suite of General Verification Tasks?}\label{sec:exp-ques-gen}

\begin{table}[t]
\caption{Comparison of tools on the loops and NLA benchmarks. T represents the amount of time, in seconds, take by each tool not including timeouts nor out of memory exceptions. The number of timeouts is reported in parentheses. We also experienced out of memory exceptions with VeriAbs which are noted in parentheses. \#P represents the number of benchmarks proved correct. The best results in each category is bolded.\label{tab:results}}

{\small
\begin{tabular}{clrrr}
\multicolumn{1}{l}{}                                                          &                     & loops             & NLA                                   & Total             \\ \hline
\multicolumn{1}{l}{}                                                          & \multicolumn{1}{l|}{\#B} & 176               & \multicolumn{1}{r|}{26}               & 202               \\ \hline
\multirow{2}{*}{ChilonInv}                                                    & \multicolumn{1}{l|}{\#P} & 144               & \multicolumn{1}{r|}{1}                & 145               \\
                                                                              & \multicolumn{1}{l|}{T}   & 974 (5) & \multicolumn{1}{r|}{\textbf{38 (1)}} & 1010 (6) \\ \hline
\multirow{2}{*}{USP-Lin}                                                      & \multicolumn{1}{l|}{\#P} & 149               & \multicolumn{1}{r|}{8}                & 157               \\
                                                                              & \multicolumn{1}{l|}{T}   & 1130 (5)          & \multicolumn{1}{r|}{97 (1)}          & 1230 (6)          \\ \hline
\multirow{2}{*}{USP-Quad}                                                     & \multicolumn{1}{l|}{\#P} & 122               & \multicolumn{1}{r|}{\textbf{15}}      & 132               \\
                                                                              & \multicolumn{1}{l|}{T}   & 1750 (31)          & \multicolumn{1}{r|}{93 (6)}          & 1840 (37)         \\ \hline
\multirow{2}{*}{CRA}                                                          & \multicolumn{1}{l|}{\#P} & \textbf{154}      & \multicolumn{1}{r|}{8}                & \textbf{162}      \\
                                                                              & \multicolumn{1}{l|}{T}   & \textbf{669 (5)}           & \multicolumn{1}{r|}{133 (8)}          & \textbf{802 (13)}         \\ \hline
\multirow{2}{*}{VeriAbs}                                                      & \multicolumn{1}{l|}{\#P} & 116               & \multicolumn{1}{r|}{2}                & 118               \\
                                                                              & \multicolumn{1}{l|}{T}   & 3430 (55, 2 OOM)         & \multicolumn{1}{r|}{42 (23, 1 OOM)}        & 3470 (78, 3 OOM)         \\ \hline
\multirow{2}{*}{\begin{tabular}[c]{@{}c@{}}ULTIMATE\\ Automizer\end{tabular}} & \multicolumn{1}{l|}{\#P} & 125               & \multicolumn{1}{r|}{9}                & 134               \\
                                                                              & \multicolumn{1}{l|}{T}   & 2270 (51)         & \multicolumn{1}{r|}{247 (17)}        & 2520 (68)        
\end{tabular}
}
\end{table}

\Cref{tab:results} gives the results of running each tool on the program verification benchmarks. Theoretically, in terms of precision, ChilonInv $\preceq$ USP-Lin $\preceq$ USP-Quad. However, this does not consider timeouts. Due to the increased power of USP-Lin and USP-Quad we would expect in terms of time taken ChilonInv $\preceq$ USP-Lin $\preceq$ USP-Quad, and this is what we see reflected in \cref{tab:results}. In our experiments we found USP-Lin to outperform ChilonInv in both the loops category and the NLA category in terms of programs verified, at the expensive of additional running time. Theoretically, USP-Quad is stronger than ChilonInv and USP-Lin; however, the extra power comes at a price of running time. As can be seen from \cref{tab:results} USP-Quad performed \emph{worse} on the loops category compared with ChilonInv and USP-Lin because of the number of timeouts. However, due to its strong non-linear reasoning capability, USP-Quad outperformed all the other tools on the difficult NLA benchmarks.

Theoretically, USP-Lin and USP-Quad are incomparable with the other tools. On one hand CRA's recurrence extraction procedure is weaker than the methods in this paper. However, CRA is also able to produce invariants involving exponential and polynomial terms, whereas the techniques in this paper are only able to produce invariants involving polynomial terms. VeriAbs is a portfolio of many different techniques, such as bounded model checking and k-induction. ULTIMATE Automizer implements a trace abstraction algorithm. We note that while USP-Lin outperformed VeriAbs and ULTIMATE Automizer on the loops category, VeriAbs and ULTIMATE Automizer have additional capabilities such as the ability to produce counterexamples in the case when an assertion does not hold. This capability is outside the scope of USP-Lin and USP-Quad. Nevertheless, we find USP-Lin to be quite competitive on our benchmark suite. It outperformed all other tools on the loops category except for CRA, where it is behind by only 5 examples. Moreover, because of the success of USP-Quad on the NLA suite we find that powerful techniques that generate polynomial invariants are required to verify interesting programs found in the literature.

\subsection{How do our Techniques Compare with Prior Methods for Complete Generation of Polynomial Invariants?}\label{sec:exp-ques-aligator}
\begin{wraptable}{r}{0.4\textwidth}
    \caption{USP-Lin and USP-Quad on the multi-path \textsc{Aligator} benchmarks.}
    \begin{tabular}{l|c|c}
        & USP-Lin & USP-Quad \\
        \hline
        egcd.c & \xmark & \cmark\\
        fermat2.c & \xmark & \cmark\\
        lcm2.c & \xmark & \cmark\\
        divbin.c & \xmark & \xmark\\
        prodbin.c & \xmark & \cmark\\
        dijkstra.c & \xmark & \xmark
    \end{tabular}
    \label{tab:aligatorbench}
\end{wraptable}
In this subsection, we consider how our method for generating polynomial invariants (which works on general programs) compares with the method presented by \citet{VMCAI:HJK2018} (which is complete, but applies to a more limited class of programs). The method of \citet{VMCAI:HJK2018} is implemented in a tool called \textsc{Aligator}. Both our methods of linear simulations as well as polynomial simulations are complete for loops whose bodies are described by a solvable polynomial map. \textsc{Aligator} is also complete for such loops. However, the completeness result of \citet{VMCAI:HJK2018} also extends to multi-path loops, where each branch is described by a solvable polynomial map (e.g., a loop of the form \texttt{while(*)\{ if (*) A else B \}}, where \texttt{A} and \texttt{B} are described by solvable polynomial maps). On such an example, \textsc{Aligator} will produce \emph{all} polynomial invariants of the loop, but \Tool cannot make the same guarantee. At the level of a loop we \emph{abstract} the loop body to a solvable transition ideal. In the case of \texttt{while(*)\{ if (*) A else B \}}, we create a solvable transition ideal that abstracts \emph{both} \texttt{A} and \texttt{B}, which is strictly weaker than considering \texttt{A} and \texttt{B} separately as in \citet{VMCAI:HJK2018}.

We investigated how USP-Lin and USP-Quad perform on multi-path loops for which \textsc{Aligator} is complete, but our techniques are incomplete. 
A direct practical comparison between USP-Lin, USP-Quad, and \textsc{Aligator} is challenging because they take different formats as input. However, a subset of 6 programs from the multi-path benchmark suite of \textsc{Aligator} are applicable for our tool\footnote{\textsc{Aligator} and \Tool treat integer division differently}. All of these 6 programs are found in the NLA suite discussed in \cref{sec:exp-ques-gen}.
More detailed results of running USP-Lin and USP-Quad on these six examples can be found in \cref{tab:aligatorbench}. The completeness result of \citet{VMCAI:HJK2018} applies to these 6 programs, so given enough time \textsc{Aligator} would be able to verify all 6 of them. As can be seen from \cref{tab:aligatorbench} USP-Lin, is unable to verify any of the 6 programs; however, USP-Quad is able to verify 4 of the 6. 
For the other 2 programs, the reason USP-Quad is unable to succeed is because those examples perform integer division in a situation in which no round-off occurs. In these programs this property is essentially encoded with an exponential invariant, which is outside the capabilities of USP-Quad.
From the results of \cref{tab:aligatorbench} we conclude that while the class of loops for which our technique is complete is a subset of \textsc{Aligator}'s, we can still generate most of the invariants needed to prove correctness.


\section{Related work} \label{sec:related}

\paragraph{Polynomial abstractions of loops}
The algorithm in \cref{sec:reflection} for computing the solvable reflection of a transition ideal can be seen as both a
refinement of \citet{POPL:KCBR2018}'s algorithm for extracting a solvable polynomial map from a transition formula
and a generalization of \citet{CAV:ZK2021}'s algorithm for computing
deterministic affine reflections.
Contrasting with \cite{POPL:KCBR2018}, our algorithm is guaranteed to find a \textit{best} abstraction as a solvable transition ideal,
which is essential to prove monotonicity of our analysis.
Contrasting with \cite{CAV:ZK2021}, our algorithm consumes and produces transition ideals, which generalize affine relations.

\citet{SAS:ABKKMS2022} considers the problem of abstracting polynomial endomorphisms by solvable polynomial maps.  The technique presented in \cref{sec:reflection} is more general in the sense that it operates on \textit{transition ideals} rather than polynomial endomorphisms.  A polynomial endomorphism $p : \mathbb{Q}[X] \rightarrow \mathbb{Q}[X]$ can be encoded as a transition ideal,
generated by the polynomials $\set{x' - p(x) : x \in X}$, in which case the algorithm in \cref{sec:reflection-computation} computes a solvable transition ideal (from which we may recover a solvable polynomial map---that is, our procedure serves the same purpose as of \citet{SAS:ABKKMS2022} for the inputs considered in that work).  Moreover, our procedure provides a precision guarantee: it finds \textit{solvable reflections} of transition ideals.   

For example, consider the loop below (left) along with its solvable reflection (right)
\[
\textbf{while} * \textbf{do } \left\{\begin{array}{l}
x := x + z^2 + 1;\\
y := y - z^2;\\
z := z + (x+y)^2
\end{array}\right\}\hspace{1cm}
\underbrace{\tuple{
  \set{ a \mapsto x+y, b \mapsto z},
  \gideal{a' - a - 1, b' - b - a}
  }}_{\text{Solvable reflection}}
\]
While the technique in \cite{SAS:ABKKMS2022} is able to identify the first
polynomial in the reflection (corresponding to the update $(x+y) := (x+y) +
1$) it cannot find the second ($z' := z + (x+y)^2$), since there is a
non-linear dependence of $z$ upon the ``defective'' variables $x$ and $y$
whose dynamics cannot be described by a solvable polynomial map.


\citet{SAS:FHG2021} considers another related problem: \textit{given a polynomial endomorphism $p$, is there a polynomial automorphism $f$ such that $f^{-1} \circ p \circ f$ is solvable?} The procedure in \cref{sec:reflection} can also be used to solve this problem: if $\tuple{t,T}$ is the solvable reflection of $p$, then such an $f$ exists (namely, $t$) exactly when the ambient dimension of $T$ is equal to that of $p$ (and $T$ has real eigenvalues). \Cref{sec:reflection} generalizes this result in the sense that, (1) we operate on transition ideals rather than polynomial endomorphisms and (2) should the answer to the decision problem be ``no'', we may still compute an \textit{abstraction} of $p$.

\paragraph{Complete polynomial invariant generation} \citet{TACAS:Kovacs2008,VMCAI:HJK2018,JACM:EOPW2023,LICS:HOPW2018,ISAAC:RCK2004} are complete methods for generating polynomial invariants on limited program structures. Our method matches the completeness results of these works on single loops whose bodies are described by solvable polynomial maps; however, the completeness result of each of these works cover additional situations. 

\citet{LICS:HOPW2018,JACM:EOPW2023} present a method that is complete for generating polynomial invariants for affine programs where all branching represents non-deterministic choice. Our methods have no issue analyzing such programs. Moreover, our method can also reason about programs with polynomial assignments as well as branching with conditionals. However, even though our method can reason about general affine programs, we can only guarantee completeness in the case of a loop whose body is described by a solvable polynomial map.

\citet{TACAS:Kovacs2008} presents complete polynomial invariant generation for \emph{P-solvable} loops. These are loops, with no branching, whose bodies have either Gosper-summable or c-finite assignments. As stated in \cref{sec:c-finite} c-finite sequences are equivalent to solvable polynomial maps, and so our technique matches \citet{TACAS:Kovacs2008} in that regard. However, while we always extract a solvable transition ideal from a loop, solvable transition ideals are not powerful enough to capture certain Gosper-summable examples. Thus, while our method is monotone on such examples, it does not guarantee completeness. \citet{VMCAI:HJK2018} extends \citet{TACAS:Kovacs2008} to the case of multi-path loops where each branch has a body with Gosper-summable or c-finite recurrence assignments. In the Gosper-summable case the comparison is the same as \citet{TACAS:Kovacs2008}. In the multi-path c-finite case we are also not complete; however, we experimentally compare with \citet{VMCAI:HJK2018} in \cref{sec:exp-ques-aligator}. In either the case of \citet{TACAS:Kovacs2008} or \citet{VMCAI:HJK2018} they cannot make a completeness guarantee for programs having branching with conditionals or programs with arbitrary loop nesting.

\citet{ISAAC:RCK2004,JSC:RCK2007} present a complete method for the case of a single multi-path loop where each branch has a body described by a c-finite recurrence. This matches the c-finite case of \citet{VMCAI:HJK2018}, except \citet{ISAAC:RCK2004,JSC:RCK2007} have an additional restriction on the eigenvalues of the c-finite recurrences (corresponding to the $\Theta_i$ variables of \cref{Eq:c-finiteClosedForm}). For \citet{ISAAC:RCK2004,JSC:RCK2007} the eigenvalues are required to be positive and rational. We have no such restriction and so our method generalizes \citet{ISAAC:RCK2004,JSC:RCK2007} in the case of a simple loop where the body is described by a c-finite recurrence. However, their completeness result goes beyond our capability in the case of a multi-path loop with positive rational eigenvalues.

\paragraph{Template Based Methods} Another method for generating polynomial invariants is to reduce the problem to constraint solving by supposing that the invariant takes the form of some parameterized template, and solving for the parameters \cite{KOJIMA201833,ATVA:OBP16,POPL:SSM2004,SAS:CJJK2012,MULLEROLM2004,PLDI:CFGG2020,GHMM2023}. These methods have the benefit of being able to handle problems with arbitrary control flow. Furthermore, they are often complete for generating invariants that fit the given template. Many template methods consider all polynomials up to some bounded degree. In such cases when the desired polynomial is within the degree bound, template based methods have the potential to generate invariants for general programs that our method would theoretically miss. In contrast, our method does not require a degree bound. Even for linear simulations, there is no bound on the degree of the invariant our method calculates.

\paragraph{Monotone algebraic program analysis}
A recent line of work has used the framework of algebraic program analysis to develop program analyses with monotonicity guarantees
\cite{CAV:SK2019,PLDI:ZK2021,CAV:ZK2021,POPL:KKZ2023}.
In particular, \citet{POPL:KKZ2023} proposes a monotone loop summarization algorithm based on the theory of linear integer/real rings.  Our technique is complementary in the sense that our method computes stronger invariant polynomial equations than \cite{POPL:KKZ2023}, but cannot synthesize invariant polynomial inequalities.


\section*{Data-Availability Statement}
An implementation of \Tool and experimental scripts are available on Zenodo \cite{artifact}. 
\iflong
\else
Omitted proofs can be found in \citet{POPL:CK2024}.
\fi

\begin{acks}
We would like to thank Tom Reps for his contributions to the discussions that lead to this paper.
This work was supported, in part,
by a gift from Rajiv and Ritu Batra, a Google PhD fellowship, and by the NSF under grant number 1942537.
Any opinions, findings, and conclusions or recommendations
expressed in this publication are those of the authors,
and do not necessarily reflect the views of the sponsoring
entities.
\end{acks}

\bibliographystyle{ACM-Reference-Format}
\bibliography{references}

\end{document}